\documentclass[11pt]{article}
\usepackage[margin=1in]{geometry}

\usepackage{amsthm}
\usepackage{amsmath}
\allowdisplaybreaks

\usepackage{mathtools}

\usepackage{amssymb}
\usepackage{bbm}
\usepackage{xcolor}
\usepackage{comment}
\usepackage{standalone}
\usepackage{tikz}
\usepackage[T1]{fontenc}
\usepackage[inline]{enumitem}
\usepackage{xfrac}
\usepackage{array}
\usepackage[vlined,ruled,algo2e,linesnumbered]{algorithm2e}
\SetAlgoSkip{bigskip}
\SetAlgoInsideSkip{smallskip}

\definecolor{darkblue}{rgb}{0,0,0.38}
\definecolor{darkred}{rgb}{0.6,0,0}
\definecolor{darkgreen}{rgb}{0.1,0.35,0}

\usepackage{hyperref}
\hypersetup{colorlinks, linkcolor=darkblue, citecolor=darkgreen, urlcolor=darkblue}

\usepackage{aliascnt}

\usepackage[
backend=biber,
style=alphabetic,
citestyle=alphabetic,
maxalphanames=4,
maxcitenames=99,
mincitenames=98,
maxbibnames=99,
giveninits=true,
]{biblatex}

\usepackage[capitalize, nameinlink]{cleveref}
\crefname{appendix}{Appendix}{Appendices}
\Crefname{appendix}{Appendix}{Appendices}
\crefname{subappendix}{Appendix}{Appendices}
\Crefname{subappendix}{Appendix}{Appendices}

\usepackage{bm}

\makeatletter
\ifdefstring{\blx@bblversion}{3.3}
  {\appto{\blx@bblstart}{
    \def\entry#1#2#3{\blx@bbl@entry{#1}{#2}{#3}{0}}
    \def\datalist[#1]#2{\blx@bbl@dlist[#1]{#2/global}}
  }}
  {\ifdefstring{\blx@bblversion}{3.2}
    {}
    {\PackageError{main-mdp-arxiv}{Unsupported bibliography data format}
      {This manuscript supports biblatex BBL formats 3.2 and 3.3.}}}
\patchcmd{\blx@bblfile}
  {\blx@ifsigned{\jobname}{bbl}}
  {\@firstoftwo}
  {}
  {\PackageError{main-mdp-arxiv}{Cannot load the embedded bibliography}
    {The installed biblatex version is incompatible with this manuscript.}}
\makeatother

\usetikzlibrary{arrows}
\pgfarrowsdeclarecombine*{spaced stealth'}{spaced stealth'}{stealth'}{stealth'}{space}{space}
\tikzset{>=spaced stealth'}

\mathchardef\mhyphen="2D

\newcommand*{\eps}{\varepsilon}
\renewcommand*{\epsilon}{\varepsilon}
\renewcommand*{\vec}[1]{\bm{#1}}
\newcommand*{\R}{\mathbb R}
\newcommand*{\RR}{\R_{\ge 0}}
\newcommand*{\Z}{\mathbb Z}
\newcommand*{\ZZ}{\Z_{\ge 0}}

\renewcommand*{\Pr}{\mathbb P}
\newcommand*{\E}{\mathbb E}
\newcommand*{\opt}{\mathrm{OPT}}

\newcommand*{\1}{\mathbf{1}}

\newcommand*{\RRR}{\mathtt{RRR}}
\newcommand*{\LH}{H}

\setlist[enumerate,2]{label=\arabic{enumi}.\arabic{enumii}, ref=\arabic{enumi}.\arabic{enumii}}
\setlist[enumerate,3]{label=\arabic{enumi}.\arabic{enumii}.\arabic{enumiii}, ref=\arabic{enumi}.\arabic{enumii}.\arabic{enumiii}}
\setlist[enumerate,4]{label=\arabic{enumi}.\arabic{enumii}.\arabic{enumiii}.\arabic{enumiv}, ref=\arabic{enumi}.\arabic{enumii}.\arabic{enumiii}.\arabic{enumiv}}

\newtheorem{theorem}{Theorem}

\newaliascnt{definition}{theorem}

\aliascntresetthe{definition}

\newaliascnt{lemma}{theorem}
\newtheorem{lemma}[lemma]{Lemma}
\aliascntresetthe{lemma}

\newaliascnt{proposition}{theorem}

\aliascntresetthe{proposition}

\newaliascnt{corollary}{theorem}

\aliascntresetthe{corollary}

\newaliascnt{property}{theorem}

\aliascntresetthe{property}

\newaliascnt{claim}{theorem}

\aliascntresetthe{claim}

\definecolor{TolDarkPurple}{HTML}{332288}
\definecolor{TolDarkBlue}{HTML}{0D4DBD}
\definecolor{TolLightBlue}{HTML}{88CCEE}
\definecolor{TolLightGreen}{HTML}{44AA99}
\definecolor{TolDarkGreen}{HTML}{117733}
\definecolor{TolDarkBrown}{HTML}{999933}
\definecolor{TolLightBrown}{HTML}{DDCC77}
\definecolor{TolDarkRed}{HTML}{661100}
\definecolor{TolLightRed}{HTML}{CC6677}
\definecolor{TolLightPink}{HTML}{AA4466}
\definecolor{TolDarkPink}{HTML}{882255}
\definecolor{TolLightPurple}{HTML}{AA4499}

\newcommand\mperiod{\rlap{\,.}}
\newcommand\mcomma{\rlap{\,,}}

\title{Strong and Compact Policies for Submodular Markov Decision Processes via LP-Based Submodular Orienteering}

\author{Lars Rohwedder\footnote{University of Southern Denmark, Odense, Denmark} \and Rico Zenklusen\footnote{ETH Zurich, Zurich, Switzerland}}

\date{}

\begin{document}
\maketitle
\begin{abstract}
Finding policies for Markov Decision Processes (MDPs)
is a central problem in areas such as Reinforcement Learning and Operations Research.
Here, we have to repeatedly choose an action that should be performed by an agent.
Depending on the action and the current state of the agent, the agent collects a reward and
randomly transitions into a new state.
The goal is to maximize the reward in expectation over a finite time horizon of length $H$. 
We consider a recently introduced variant that generalizes the traditionally additive reward function in the model to a monotone submodular one, which allows for capturing a range of interesting applications.

Without the stochastic component, this problem is equivalent to the Submodular Orienteering problem,
where the goal is to find an $s$-$t$ walk in a directed graph maximizing a monotone submodular function under
a length constraint.
We present a novel LP-based algorithm for Submodular Orienteering using ideas from the Sherali-Adams hierarchy and Round-or-Cut.
Our guarantees are comparable to the known quasi-polynomial time logarithmic approximation for Submodular Orienteering,
but also extend to the setting of Submodular Markov Decision Processes.
In the polynomial time regime, we present an $O(n^\eps)$-approximation (and $O(H^\eps)$ for Submodular MDPs) for every
$\eps > 0$, where $n$ is the number of vertices, which was unknown even for Submodular Orienteering.
Prior to our work, the best known approximation guarantee for Submodular MDPs had an approximation ratio linear in~$H$.

Beyond these algorithmic results, our methods reveal a trade-off between the approximation guarantee and the number of previously visited vertices on which an agent conditions its decision.
 \end{abstract}

\section{Introduction}
The Submodular Orienteering problem asks to maximize a non-negative monotone submodular function over the vertices of a walk in a directed graph $G=(V,A)$ from a vertex $s$ to a vertex $t$.\footnote{A function $f\colon 2^U \to \R$ is submodular if for all $A\subseteq B \subseteq U$ and $u\in U\setminus B$, it satisfies $f(A \cup \{u\}) - f(A) \ge f(B\cup \{u\}) - f(B)$.
It is monotone if $f(A) \le f(B)$ for all $A\subseteq B \subseteq U$. Throughout this work, we always assume that the submodular functions we consider are non-negative. Moreover, they are assumed to be given implicitly through a value oracle, which, for a given input set $S$, returns the value $f(S)$ in polynomial time.}
Additionally, the length of the walk may be constrained via a length function $w: A\rightarrow \ZZ$, which
should not exceed a given budget $C$ when summed over all edges in the walk.
We denote by $f(W)$ and $w(W)$, respectively, the submodular objective value and the length of the walk $W$.
The Recursive Greedy algorithm by \textcite{chekuri2005recursive} is an elegant combinatorial
algorithm that achieves an optimal (under natural complexity assumptions) logarithmic approximation for it in quasi-polynomial time.

In this paper we first develop an LP-based randomized rounding that recovers the same guarantees (in expectation)
up to constant factors as~\textcite{chekuri2005recursive} and an extension that gives a polynomial time $O(n^\epsilon)$-approximation for any $\epsilon > 0$.
\begin{theorem}\label{thm:main-orienteering}
	There is an LP-based approximation algorithm for Submodular Orienteering that
	runs in time $n^{O(\log n)} \langle w \rangle^{O(1)}$ and outputs an $s$-$t$ walk $W$ with $\E[f(W)] \ge \Omega(\frac{\opt}{\log n})$
	and $\E[w(W)] \le C$.
	Alternatively, it can output in time $n^{O(\sfrac{1}{\epsilon})} \langle w \rangle^{O(1)}$ an $s$-$t$ walk with $\E[f(W)] \ge \Omega(\frac{\opt}{n^\epsilon})$ and $\E[w(W)] \le C$ for any $\epsilon > 0$.\footnote{We state all running times as the number of operations in the word RAM model, where a word is large enough to store any input value, including the values returned from function $f$.}
	Here, $\langle w \rangle$ is the encoding size of the function values of $w$.
\end{theorem}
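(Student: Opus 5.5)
We would mirror the recursive structure of Recursive Greedy inside an LP, using Sherali--Adams-style lifting over ``midpoint guesses''. First we reduce to walks with at most $n$ (more generally $2^k$) steps and pseudo-polynomial lengths: we guess the split of the budget $C$ at each recursion level. This is where the $\langle w\rangle^{O(1)}$ factor enters, possibly after scaling lengths so that budgets are polynomially bounded. We then define a \emph{recursive flow LP} over a tree of depth $d=\lceil\log_2 n\rceil$ for the first result, or depth $O(1/\epsilon)$ with branching $n^{\epsilon}$ for the second. A node of this tree corresponds to a subproblem $(u,v,c,\ell)$: an $u$-$v$ walk of length budget $c$ using at most $2^\ell$ steps. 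The LP variables $x_{(u,v,c,\ell)}(m,c_1)$ give a fractional distribution over the midpoint $m$ and the budget split $c_1+c_2=c$. Conditional copies of the variables (Sherali--Adams style) ensure consistency between a parent's choice and its children's distributions. The total number of variables is $(n\cdot C)^{O(d)}$, i.e.\ $n^{O(\log n)}$, respectively $n^{O(1/\epsilon)}$ with branching factor $b=n^\epsilon$, where a node guesses $b-1$ intermediate vertices.

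The objective cannot be the submodular function directly, so we use a Round-or-Cut framework. We add coverage variables $z_v\in[0,1]$ for each vertex, bounded by the probability that the recursion visits $v$ (a marginal computed linearly from the lifted variables). We then target the concave/multilinear relaxation value via a constraint $\hat f(z)\ge \opt/\alpha$, with $\opt$ guessed by binary search. Given a fractional solution, we attempt rounding. Rounding is a top-down randomized procedure: at the root we sample a midpoint and budget split from $x$, then recurse independently into both halves using the conditional distributions. This yields a walk whose expected length equals the LP length and hence is at most $C$ (we need this only in expectation). The key claim is that each vertex $v$ is visited with probability at least $z_v/O(d)$, or at least that the walk's value is at least $\Omega(\hat f(z)/d)$. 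If the rounded walk (or its expectation, estimated by sampling) has too small a value, we separate: by a greedy-type argument this yields a violated linear inequality of the form $\sum_v z_v\,(\text{marginal of }v\text{ w.r.t. }S)\le \ldots$ for some set $S$, and the ellipsoid method continues.

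The main obstacle is the loss analysis: independent rounding of the two halves does not preserve the value of a submodular function across them, and a naive argument loses a factor per level, i.e.\ $2^d=n$. To lose only an additive $O(1)$ per level, and hence $O(d)$ overall, we would follow the Recursive Greedy idea inside the LP. The LP should encode, at each node, the value gained in the second half \emph{conditioned} on the first half's set, via the lifted variables. The rounding then processes the left child before the right child, and the separation oracle checks the inequality $f_{S_{\text{left}}}(\text{right})\ge$ LP-claimed residual. When this check fails, we have a cut. Proving that the ensemble of these local checks telescopes to $\E[f(W)]\ge \opt/O(d)$ is the crux. We would try a potential argument akin to Chekuri--P\'al, charging each level's deficit to a $1/d$ fraction of $\opt$. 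Finally, setting $d=\log n$ gives the $O(\log n)$ bound, and branching $n^\epsilon$ with depth $1/\epsilon$ gives a loss of $O(b\cdot d)$, i.e.\ $O(n^{\epsilon})$ after rescaling $\epsilon$.
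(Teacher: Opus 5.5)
The step you call ``the crux'' is where the whole proof lives, and the two mechanisms you sketch for it do not work. The first is an objective $\hat f(z)$ with $z_v$ bounded by vertex marginals, together with greedy-type cuts in $z$-space such as $f(S)+\sum_v z_v f_S(v)\ge T$. The paper's example defeats it. Take $k=\Theta(\sqrt n)$ internally disjoint $s$-$t$ paths, give path $i$ color $i$, and let $f$ count colors, so $\opt=1$. The uniform mixture of the lifted integral paths is feasible for any lifted relaxation. It has $z_v=1/k$ and $\hat f(z)=\Omega(k)$, and it satisfies every such cut for all $T\le k$. Your midpoint rounding preserves these marginals exactly and still gets value $1$. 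So ``$v$ is visited with probability $\ge z_v/O(d)$'' does not give value $\Omega(\hat f(z)/d)$, and no cut of that form removes the bad point; useful cuts must involve the lifted (correlation) variables. The second mechanism has the LP encode the gain of the right half conditioned on the left half's set. That set is a random outcome of the rounding, and a cut derived from $f_{S_{\text{left}}}$ for the rounded $S_{\text{left}}$ has no evident reason to be valid for $\opt$, whose left half is different. You never say what the cut is or why it is valid, and that is exactly the missing lemma.

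The paper's key idea (\Cref{lem:round-or-cut}) is to build, together with the random path, a random linear function $\ell$ on the lifted variables. Its coefficients depend on the realized subpaths: for each split vertex $w$, $\ell_w(\vec y)=y_w f(P^{|w})+\ell^{|w}(\vec y^{+w})$, where $\ell^{|w}$ comes from recursing on $\vec x^{|w}$ with the marginal function $f(\cdot\mid P^{|w})$. Validity on every integral path $P'$ needs only two facts, because on $\1^{(r)}_{P'}$ only the terms with $w\in P'$ survive. The first is $f(P^{|w})+f(P'_{i+1}\mid P^{|w})\ge f(P'_{i+1})$, which is monotonicity and holds whatever $P^{|w}$ was realized. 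The second is subadditivity over the $d$ pieces. A Recursive-Greedy telescoping then gives $\E[f(P)]\ge \E[\ell(\vec x)]/((d-1)(r+1))$, and $\ell(\vec y)\ge T$ is the cut when rounding fails.

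Separately, your plan does not meet the stated running times. Guessing budget splits gives $(nC)^{O(d)}$ variables, which is pseudo-polynomial rather than $n^{O(\log n)}\langle w\rangle^{O(1)}$, and scaling lengths would typically violate the budget by a $(1+\eps)$ factor. The paper instead adds a single linear length constraint on singleton variables of the layered graph, whose vertices $(u,v,i)$ carry shortest-path lengths. It then uses that the rounding preserves singleton marginals (\Cref{lem:marginal-preserving}), which is why only $\E[w(W)]\le C$ is claimed. Also, a node that jointly guesses $b-1=n^{\eps}$ intermediate vertices has $n^{b-1}$ options, which is superpolynomial. The paper builds the $d$ pieces sequentially, conditioning only on the last vertex of the previous piece, so at most $r$ vertices are ever conditioned on and $n^{O(r)}$ variables suffice.
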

The new trade-off in the polynomial time regime can also be obtained by a purely combinatorial algorithm, which we include in \Cref{sec:comb_neps}.

Our algorithm differs significantly from previous LP-based algorithms for problems involving submodular maximization,
which usually rely on the multilinear extension. (See, in particular,~\cite{vondrak2008optimal,calinescuMaximizingMonotoneSubmodular2011a,ChekuriVZ10,chekuriSubmodularFunctionMaximization2014,feldmanUnifiedContinuousGreedy2011,eneConstrainedSubmodularMaximization2016,ConstrainedSubmodularMaximization,buchbinderConstrainedSubmodularMaximization2024} and references therein.)
In our problem, the multilinear extension is extremely weak.\footnote{Consider the graph consisting of $k = \Theta(\sqrt n)$ many $s$-$t$ paths $P_1,\dotsc,P_k$ of length $k+2$ each that are disjoint except for $s$ and $t$. The vertices of $P_i\setminus \{s,t\}$ are colored with color $i$. Function $f$ counts the number of different colors. Thus, the optimal solution is $1$, but the multilinear extension on the symmetric solution that takes each path with weight $1/k$ has a value of $\Omega(k) = \Omega(\sqrt n)$. The symmetric solution is a convex combination of integral solutions, hence, using the multilinear extension with any LP relaxation that is oblivious to $f$ seems hopeless.}
Our relaxation is based on a Sherali-Adams type extended formulation. The relaxation
and the randomized rounding algorithm are
closely related to techniques known from the Directed Steiner Tree problem, the Robust Shortest Path problem
and others, 
see, e.g.,~\cite{10.1137/20M1312988, li2024polylogarithmic, bamas2026randomized},
which, however, do not involve submodular functions.
We manage to integrate submodular functions with a Round-or-Cut approach. Round-or-Cut, first introduced
in~\cite{carr2000strengthening}, is a framework where one starts with a fractional solution to a potentially
weak LP relaxation
and then either obtains a good solution by rounding it or derives new cutting planes that can be added to
the relaxation.
This approach only needs to generate cutting planes when rounding fails, which allows for using properties of the failed rounding approach to generate cutting planes.
For our case, we proceed as follows.
Along with the rounded solution, our algorithm outputs a linear upper bound
on the submodular function such that the rounded solution has an approximation guarantee with respect to the
linear function's value in the current fractional point.
If the submodular value of the rounded solution is not high enough, the linear upper bound can be used
for a cutting plane that we
can add to the relaxation and repeat.

By running the algorithm on LP solutions that satisfy additional properties, we can influence the output
distribution. 
While the LP-based approach can be of independent interest, we emphasize
a specific application in Markov Decision Processes that nicely highlights advantages of our approach and leads to a significant improvement compared to prior work.
\subsection{Markov Decision Processes}
The Markov Decision Process (MDP) is a fundamental model to describe sequential decision making under uncertainty.
It is arguably the central mathematical framework for Reinforcement Learning, which is a subfield of Machine Learning that has seen tremendous success in recent years, and it is also a classical model in Operations Research (see, e.g.,~\cite{PUTERMAN1990331, wiering2012reinforcement}). 
An MDP has a finite set of states $S$ and actions $A$.
A common variant that we focus on here assumes a finite time horizon $\{1,2,\dotsc,H\}$.
At each time $t\in\{1,2,\dotsc,H\}$, an agent is in some state $s_t\in S$.
The agent then selects an action $a_t \in A$ according to a policy, which is the object we aim to optimize.
Based on the current state $s_t$ and the action $a_t$ chosen by the agent, the environment randomly
selects the next state $s_{t+1}$. The distribution of $s_{t+1}$ is completely determined by $s_t$ and $a_t$
and it is known a priori to the agent via the probabilities
\begin{equation}\label{eq:mdp-successor}
	\Pr[s_{t+1} = s' \mid s_t = s,\ a_t = a] , \quad \text{for }t\in\{1,2,\dotsc, H-1\}, s'\in S, s\in S, a\in A \mperiod
\end{equation}
Similarly, the initial state $s_1$ is sampled with explicitly given probabilities $\Pr[s_1 = s']$ for $s'\in S$.
In the classical model, the agent collects rewards for each state and action and
wants to maximize their sum in expectation.
We consider a more general model, where the total reward is $f(\{s_1,a_1,\dotsc,s_{H}, a_H\})$ for a monotone submodular function $f\colon 2^{S\cup A} \rightarrow \RR$.
We devise a policy that selects an action based on the trajectory $s_1,a_1,s_2,a_2,\dotsc,s_t$, that is, all previous states and actions.
Note that in the classical additive case, where rewards are additive instead of submodular, there is an optimal \emph{Markovian} policy, i.e., a policy that solely depends on the current state and time. It can be found efficiently by dynamic programming using Bellman's equation.
This is not the case in our setting, as for example pointed out in~\cite{prajapat2024submodular}.
It was also observed in~\cite{prajapat2024submodular} that the deterministic
version, that is, each probability in~\eqref{eq:mdp-successor} and those of initial states are either zero or one,
corresponds exactly to the Submodular Orienteering problem.

Our model was recently introduced and studied by \textcite{wang2020planning,prajapat2024submodular}.
It is motivated by a wide range of applications from Reinforcement Learning to planning, including navigation, biodiversity monitoring, Bayesian experimental design, and coverage maximization.
We stick to the terminology used in \cite{prajapat2024submodular} and call this model \emph{Submodular Markov Decision Process (Submodular MDP)}.
Note that the dynamics of a Submodular MDP remain Markovian, but the reward function is not Markovian anymore.
Such processes are sometimes also referred to as \emph{non-Markovian Reward Decision Processes (NMRDPs)} in the literature.

The best approximation guarantee for Submodular MDP without additional assumptions is $O(H)$ from~\cite{wang2020planning}.
\textcite{pmlr-v235-de-santi24b} give other non-trivial approximations assuming the submodular function
has \emph{bounded curvature}, a parameter that measures how much the marginal value of an element can decrease and captures how far the function is from being additive.
The works above also contain empirical results and results for simplified cases.

We improve the approximation ratio from $O(H)$ to $O(\log H)$ and provide a trade-off between approximation ratio and running time.
\begin{theorem}\label{thm:main-mdp}
	Submodular Markov Decision Processes admit a randomized $O(\log \LH)$-approximate policy
	that can be computed in $n^{O(\log \LH)} \langle p \rangle^{O(1)}$ time.
	Furthermore, for any $\epsilon > 0$, there is a randomized $O(\LH^\epsilon)$-approximate policy that can be computed in $n^{O(\sfrac{1}{\epsilon})}\langle p \rangle^{O(1)}$ time.
	Each decision in the former policy depends only on $O(\log \LH)$ many vertices from the trajectory; in the latter on $O(\sfrac{1}{\epsilon})$ vertices.
	Here, $n = \mathrm{poly}(|S|, |A|, H)$ and $\langle p \rangle$ is the encoding size of the transition probabilities.
\end{theorem}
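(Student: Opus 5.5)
We plan to prove \Cref{thm:main-mdp} by reducing a Submodular MDP to a stochastic analogue of Submodular Orienteering on a layered DAG, and running the same LP-based Round-or-Cut machinery as for \Cref{thm:main-orienteering}. First we unfold the MDP over time into a layered graph with vertex set $\{(s,t), (s,a,t)\}$ for $s\in S$, $a\in A$, $t\le H$. This graph has $n=\mathrm{poly}(|S|,|A|,H)$ vertices, and every trajectory is a root-to-leaf walk of depth $O(H)$. A policy is then a ``stochastic walk'': at state vertices the agent chooses, and at action vertices nature branches with the given probabilities.

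The relaxation will be a Sherali--Adams-type extended formulation of the kind used for Directed Steiner Tree. We recursively split the horizon $[1,H]$ at midpoints, giving a recursion tree of depth $\ell=O(\log H)$; for the $n^\epsilon$ variant we instead split into $H^{\epsilon}$ pieces, so the depth is $O(1/\epsilon)$. For each node of the recursion we introduce flow variables $x$ conditioned on the states visited at the split times of all ancestor intervals. Conditioning on at most $\ell$ states gives $n^{O(\ell)}$ variables. The constraints are of two kinds: (i) flow conservation consistent with the transition probabilities, so that a fractional policy conditioned on reaching state $s$ at time $t$ induces the correct distribution over states at time $t+1$; and (ii) consistency between parent and child intervals, which is the Sherali--Adams part. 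Because the MDP transitions are exogenous, the deterministic ``choose a midpoint'' of orienteering is replaced by ``the distribution of the state at the midpoint is determined by the policy'', and this is still linear in the conditioned variables. The objective is $\max \lambda$ subject to cuts $\sum_v c_v y_v\ge\lambda$, where $y_v$ is the marginal visit probability of $v$ and the $c$ come from linear upper bounds on $f$ generated by the Round-or-Cut process.

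Rounding proceeds top-down. Given the states at the ancestor split points, we sample the next policy decision from the conditional LP distribution. This yields a randomized policy whose decision at time $t$ depends only on the $O(\ell)$ split-point states (and the current state), as the statement promises. The key analytic step, mirroring the orienteering case, is to show that the rounded policy $\pi$ satisfies $\E[f(\text{traj}(\pi))]\ge \Omega(1/\ell)\cdot \sum_v c_v y_v$ for an explicit linear function $c$ that upper bounds $f$ on integral trajectories in the sense that $\E_{\opt}[\sum_v c_v\1_v]\ge \opt$-type inequalities hold. Concretely, $c$ would be built from greedy marginals along the recursion: at each of the $\ell$ levels, the marginal gain of a subinterval relative to what its sibling already collected. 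Submodularity then loses one unit per level in a telescoping argument, which accounts for the $1/\ell$ factor and gives $O(\log H)$ or $O(H^\epsilon)$ (the latter requires handling $H^\epsilon$ siblings per level; we would still aim for a loss of $O(\ell\cdot H^\epsilon)$ adapted accordingly). If the rounded value falls below $\lambda/O(\ell)$, the inequality $\sum_v c_v y_v\ge\lambda$ is violated by $\opt$'s point for a guessed $\lambda\le\opt$, so it is a valid cut. We then use ellipsoid-style Round-or-Cut with binary search over $\lambda$, which terminates in $n^{O(\ell)}\langle p\rangle^{O(1)}$ time.

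The main obstacle is this cut-validity and approximation step in the stochastic setting. Because $f$ is evaluated on a random trajectory, the linear surrogate must upper bound $\E[f]$ under every policy, not just along a single walk. The surrogate coefficients are therefore conditional on the branch, and we must ensure that submodularity's marginal-decrease argument survives taking expectations over nature's branching. I would first try defining $c$ per conditioned context (ancestor split states) and proving the inequality pointwise for each realized trajectory of $\opt$, then averaging. This works because $\opt$ also induces a feasible LP point with the same conditional structure.
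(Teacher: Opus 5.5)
Your overall architecture matches the paper: a Sherali--Adams-type lift, linear constraints forcing the transition law, round-or-cut with recursively built linear upper bounds, an $O(dr)$ loss from the recursion depth, and comparison against $\E[\1^{(r)}_{\opt}]$ via a pointwise upper bound plus linearity. However, the two steps that carry the MDP-specific content are either set up in a way that fails or left open.

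First, the rounding must condition only on \emph{past} vertices. You condition on ``the states visited at the split times of all ancestor intervals'' and round top-down. For the interval $[1,H/2]$ this includes the state at time $H/2$, which lies in the future. This is the midpoint-first scheme of Recursive Greedy and of the Steiner-tree-type roundings, and it does not yield a policy. A pre-sampled target state need not be reachable once nature acts. Moreover, conditioning on a future vertex changes the conditional law at chance vertices, so a constraint $x_{I\cup\{u\}}=x_I\,p_c(u)$ is valid for $\opt$ only if $I\subseteq V_1\cup\dots\cup V_i$. The paper's \Cref{alg:recursive_rounding} instead splits an interval into $d$ consecutive blocks and rounds them \emph{sequentially}: block $i$ is rounded from $\vec x^{|u}$, where $u$ is the last vertex already chosen in block $i-1$. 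Unrolled (\Cref{lem:mdp-rrr}), the choice of $v_{t+1}$ conditions on $v_k$ for $k\in\{\max\{m d^j : m\in\ZZ,\ m d^j\le t\} : j\in\ZZ\}\setminus\{0\}$. These vertices are all already visited and include $v_t$. Then \eqref{eq:lp-mdp} makes this conditional law equal to $p_{v_t}$ at chance vertices, so the rounding distribution is exactly the trajectory distribution of the policy in the real MDP.

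Second, the surrogate. Cuts $\sum_v c_v y_v\ge\lambda$ in the singleton marginals cannot work. Take $H$ layers of $k=H$ vertices each, with all arcs between consecutive layers; give the $j$th vertex of every layer color $j$, and let $f$ count colors.
\begin{itemize}
\item The lift of the uniform distribution over the $k$ ``straight'' paths lies in $Q^{(r)}$. Any rounding that samples from the LP law conditioned on the current vertex returns a straight path, of value $1$.
\item The $k$ ``shifted'' paths (position $j+i \bmod k$ in layer $i$) each have value $H$ and the same marginals $\vec y\equiv 1/k$. So any singleton cut valid for all of them has $c(\vec y)\ge\lambda$ for every $\lambda\le H$.
\end{itemize}
Nothing can be cut off, yet the rounding is off by a factor $H$. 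Your per-context coefficients in the last paragraph point the right way, but then the cut must be linear in the lifted variables $x_I$, as the paper's $\ell$ is.

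There is a further problem: ``the marginal gain relative to what the sibling already collected'' is not a function of the conditioned boundary vertex, because the next block is rounded knowing only that vertex. The paper's key device (\Cref{lem:round-or-cut}) handles this as follows. For each boundary vertex $w$, it samples one prefix $P^{|w}$ from the rounding distribution conditioned on ending at $w$ (by reverse-sampling the previous boundary vertex with probability $x^{|v}_w$). It then recurses on $\vec x^{|w}$, restricted to the next block, with $f(\cdot\mid P^{|w})$, and lifts back via $\ell_w(\vec y)=y_w f(P^{|w})+\ell^{|w}(\vec y^{+w})$. The resulting cut is random and only $\E[\ell(\vec x)]$ is controlled. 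Hence round-or-cut needs repeated sampling and Markov's inequality. It must also return the LP point $\vec x$ rather than a rounded outcome filtered by a success test: nature produces the trajectory, and filtering would bias it.

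Finally, a cut must be violated by the current point and valid for $\opt$'s point, not ``violated by $\opt$'s point'' as you wrote.
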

Roughly speaking, we use our LP-based algorithm for Submodular Orienteering on an LP with constraints that force certain edges to be taken with the transition probabilities from the input.
It follows from properties of our LP rounding that the distribution over walks that we obtain can be reformulated as a randomized policy together with the transition distributions.

This result falls into the area of adaptive algorithms in stochastic combinatorial optimization.
The adaptivity comes from the conditioning operation in Sherali-Adams on
previous events that the LP rounding performs.
Using linear programs to derive policies is common in stochastic optimization, see, e.g.,~\cite{dean2008approximating, guha2010approximation}, but we are not aware of previous examples using conditioning in
Sherali-Adams or another hierarchy in a similar way.

The complexity of representing history-dependent policies is a recurring issue in sequential decision making.
For MDPs with non-additive rewards, augmenting the state with relevant history can substantially enlarge the state space, and an explicit policy representation may require exponential space~\cite{baccus1996rewarding,thiebauxDecisiontheoreticPlanningNonMarkovian2006}.
For submodular MDPs, \textcite{wang2020planning} already obtain an $O(H)$-approximation with a polynomial-size policy representation, while \textcite{prajapat2024submodular} explicitly discuss the role of history dependence and the difficulty of representing general policies.
Our result achieves an $O(\log H)$-approximation with a quasi-polynomial-size policy representation, whose decisions depend on only $O(\log H)$ selected vertices of the past trajectory.
Prior to our work, it was unclear whether an $o(H)$-approximation could be achieved with a policy that has a short description.

\subsection{Other related works}
Our Sherali-Adams approach and randomized rounding are inspired by algorithmic ideas
appearing in a number of previous works
on variants of the Directed Steiner Tree problem~\cite{10.1137/20M1312988,guo2022approximating}, the Robust Shortest Path problem~\cite{li2024polylogarithmic}, the Santa Claus problem~\cite{bateni2009maxmin, chakrabarty2009allocating},
as well as others. Recently, \textcite{bamas2026randomized} gave a general framework capturing many of these examples.
The most obvious difference is that our algorithm extends these techniques by handling monotone submodular functions.
Previously this was known only for very specific submodular functions, namely, some forms of coverage functions
(capturing for example the number of different colors when elements are partitioned into color classes).
It is shown for example in~\cite{bamas2026randomized} how to obtain via an LP approach a similar 
trade-off between running time and approximation as in our result for Orienteering
when maximizing the number of colors in the solution. This would not directly lead to a policy for MDPs,
even with these restrictive functions.
This is because of other subtle differences in the randomized rounding.
To apply the algorithm to MDPs, it is important that when making a decision for one
vertex or edge, we only condition on previous vertices and not future vertices, so that the resulting policy does not depend on the future.
The randomized rounding we present in this paper has such a property, while~\cite{bamas2026randomized, li2024polylogarithmic},
the closest related algorithms, condition on the middle point of the walk and then recurse in the first
half and the second half.

Many stochastic routing problems have been considered in the literature.
Apart from the previously mentioned works on Submodular MDPs~\cite{wang2020planning,prajapat2024submodular, pmlr-v235-de-santi24b},
we are not aware of any works that
have direct implications for our setting,
so we only give a few representative examples here:
\textcite{jiang2020algorithms}
consider the problem of collecting a high reward by a tour of low cost. In their model, either the reward
or the cost is stochastic and both are additive.
This is substantially simpler than our setting where both rewards and the movement of the agent
are stochastic. 
\textcite{tan2024informative} consider a model where the agent moves in the network
and makes stochastic observations at each vertex. The agent continues until a monotone submodular function
over the observations is large enough. One can again view this as stochastic rewards without stochastic
movement. We note that~\cite{tan2024informative} also use Submodular Orienteering as a subroutine
for their policy, but in the details, their approach differs significantly from ours.
 \section{LP-Based Orienteering}
\label{sec:orienteering}
For clarity, we present our main technical result on a simple variant of Submodular Orienteering, to which we later reduce:
We assume that the graph is layered with layers $V_1\cup\cdots\cup V_{\LH}$, where
all arcs go from one layer to the next. Note that since this is a DAG, every walk is a path.
We further assume that $H = d^r$ for suitable parameters $d$ and $r$ that we will specify later. 
Our goal is to find a path $P$ that is \emph{layer-spanning}: $P$ starts at any vertex in $V_1$ and ends at any vertex in $V_{\LH}$.
For now we do not consider restrictions on the length like the ones given by $w$ and $C$ earlier.

As discussed, a key novelty in our approach is that it is LP-based.
The linear inequality description we use can naturally be interpreted as a strengthening of the following basic linear description through an extended formulation:
\begin{alignat*}{3}
	&&x_{u} + x_{v} &\le 1 &\qquad&\forall (u,v)\in (V_i\times V_{i+1}) \setminus A \text{ for some }i\in \{1,\dots,\LH-1\} \\
	&&\sum_{v\in V_i} x_{v} &= 1 &&\forall i\in\{1,\dotsc,\LH\} \\
	\raisebox{4mm}[\height]{$(Q)$}\qquad &&x_v &\le 1 &&\forall v\in V \\
																						 &&x_v &\ge 0 &&\forall v\in V \mperiod
\end{alignat*}
We denote by $Q$ the polytope defined by the basic linear description above.
The variable $x_v$ indicates whether vertex $v$ is in the path or not.
For a $\{0,1\}$-solution, the first constraint ensures that for any two consecutive vertices $u$ and $v$ in the path, there is an arc between them.
The second constraint ensures that exactly one vertex is chosen from each layer.

In a fractional solution, we think of $x_v$ as the probability that $v$ is in a random path $P$.\footnote{This is purely for intuition at this point and should not be read as a formal statement.} We would like to extend and
strengthen this relaxation by including some additional information on the correlation between variables in the linear description, specifically, on how
likely a set of vertices is to be contained in $P$ simultaneously.
Note that without a strengthening the basic relaxation described above can be extremely weak.
In particular, it can be feasible even if the graph does not contain any layer-spanning path.\footnote{For example, one can think of a $2$-layered graph without any arcs.
A feasible LP point is obtained by putting two values of $0.5$ on each layer (and zeroes everywhere else). 
}
There are simpler ways to describe the convex hull of layer-spanning paths by a compact linear programming
formulation, but the specific variables we introduce will give us more flexibility in the rounding algorithm and
to express more complicated linear constraints that we will add later.

Towards this, we write an extended formulation
parameterized by some $r\in\ZZ$. Our variables are $x_I$ for all $I\subseteq V$, $|I|\le r+1$. 
Variable $x_I$ should mimic the behavior of $\prod_{i\in I} x_i$ or, in other words, $x_I$ intuitively describes
the probability that $I\subseteq P$. Readers familiar with the Sherali-Adams hierarchy,
see~\cite{sherali1990hierarchy}, will
recognize the similarity with this construction and, indeed, our construction can be seen as a simplified version of the Sherali-Adams hierarchy.
For self-containedness and clarity, we construct the relaxation here from first principles.

For each $I \subseteq V$ with $|I|\le r$, and each constraint of the original LP, we multiply each side of each constraint by $\prod_{i\in I} x_i$.
For example, assume that $u,v,w\in V$ are all distinct. Let
$I = \{u, w\}$ and consider the constraint $x_u + x_v \le 1$. The new constraint is then
$x_u \cdot x_w \cdot x_u + x_u \cdot x_w \cdot x_v \le x_u \cdot x_w$. This constraint is clearly satisfied if
the original one was. Since binary variables $y\in\{0,1\}$ satisfy $y^2 = y$, we can remove duplicate variables
in monomials. The previous example then simplifies to $x_u \cdot x_w + x_u \cdot x_w \cdot x_v \le x_u \cdot x_w$.
This maintains feasibility if $\vec x$ is binary.
Now we replace products of variables by our new variables.
In the example, we obtain $x_{\{u,w\}} + x_{\{w, v, u\}} \le x_{\{u,w\}}$. Finally, we add the constraint $x_{\emptyset} = 1$, which corresponds to requiring the empty product to equal $1$.

We obtain the following linear description, which we denote by $Q^{(r)}_G$, or simply by $Q^{(r)}$ if $G$ is clear from context.
\begin{alignat}{4}
	&& x_{I\cup\{u\}} + x_{I\cup \{v\}} &\le x_I &\qquad&\forall (u,v)\in (V_i\times V_{i+1})\setminus A \text{ for some }i\in \{1,\dots, \LH-1\},\label{lp:edge}\\
	&& & &&\forall I\subseteq V,  |I|\le r \notag\\
&& \sum_{v\in V_i} x_{I\cup \{v\}} &= x_I &&\forall i\in\{1,\dotsc,\LH\}, \ \forall I\subseteq V, |I|\le r & \label{lp:layer}\\[-0.6em]
	\raisebox{3mm}[\height]{$(Q^{(r)})$}\qquad	&&	x_{I\cup \{v\}} &\le x_I &&\forall v\in V \ \forall I\subseteq V, |I|\le r & \notag\\
																				&&	x_\emptyset &= 1 && & \notag\\
																				&& x_I &\ge 0 &&\forall I\subseteq V, |I|\le r+1 & \notag
\end{alignat}
For simplicity of notation, we write $x_v = x_{\{v\}}$, $x_{u,v} = x_{\{u,v\}}$, etc.

Consider a layer-spanning path or, in other words, a solution $P$.
Clearly, the incidence vector $\1_{P}\in \{0,1\}^V$ of the vertices in $P$ is feasible for the basic linear description $Q=Q^{(0)}$.
For $\vec x\in [0,1]^n$, let $\vec x^{(r)} \in [0,1]^{V \choose \le r+1}$ be defined by $x^{(r)}_I = \prod_{i\in I} x_i$ for all $I\subseteq V$, $|I|\le r+1$.\footnote{Here, ${V \choose \le r+1}$ stands for $\{U\subseteq V : |U|\le r+1\}$.}
For integral solutions $\vec x\in Q$ of the basic linear description $Q$ we have $\vec x^{(r)}\in Q^{(r)}$ by the previous discussion.

A crucial feature of this extended formulation is that it allows us to condition on variables.
For some $\vec x\in Q^{(r)}$ and $v\in V$ with $x_{v} > 0$, define $\vec x^{|v} \in [0,1]^{\binom{V}{\le r}}$ by
\begin{equation*}
	x^{|v}_I \coloneqq \frac{x_{I\cup \{v\}}}{x_v},\quad I\subseteq V, |I|\le r \mperiod
\end{equation*}
Note that $x^{|v}_{v} = 1$ and $x^{|v}_{u} = x^{|u}_{v} \cdot \frac{x_{u}}{x_{v}}$ (analogous to Bayes' rule),
which matches our intuition that $\vec x^{|v}$ corresponds to conditioning the probability distribution on including $v$.
The following lemma shows that conditioning a point $\vec x \in Q^{(r)}$ on one variable leads to a point that is still feasible for the extended formulation, but with one level less, i.e., a point in $Q^{(r-1)}$.
\begin{lemma}\label{lem:condition}
	Let $r\in\Z_{\ge 1}$, $\vec x\in Q^{(r)}$, and $w\in V$ with $x_w > 0$. Then
	$\vec x^{|w}\in Q^{(r-1)}$.
\end{lemma}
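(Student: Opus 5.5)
The plan is to verify the constraints of $Q^{(r-1)}$ for $\vec x^{|w}$ one at a time. Each should follow from a constraint of $Q^{(r)}$ applied to the set $I\cup\{w\}$, after dividing by $x_w>0$. First note that $\vec x^{|w}$ is indexed by sets $I$ with $|I|\le r$, which is exactly the index set $\binom{V}{\le (r-1)+1}$ of $Q^{(r-1)}$. The definition is well posed because $|I\cup\{w\}|\le r+1$. The key observation is that for every $I$ with $|I|\le r-1$, the set $J\coloneqq I\cup\{w\}$ satisfies $|J|\le r$. So every constraint of $Q^{(r)}$ indexed by $J$ is available. We also have $J\cup\{u\}=(I\cup\{u\})\cup\{w\}$, which gives $x_{J\cup\{u\}}=x_w\, x^{|w}_{I\cup\{u\}}$ and $x_J = x_w\, x^{|w}_I$.

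With this, the edge constraints~\eqref{lp:edge}, the layer constraints~\eqref{lp:layer}, and the monotonicity constraints $x^{|w}_{I\cup\{v\}}\le x^{|w}_I$ are handled uniformly. Take the corresponding constraint of $Q^{(r)}$ for the index set $J=I\cup\{w\}$, then divide both sides by $x_w>0$. For instance, for $(u,v)\in (V_i\times V_{i+1})\setminus A$ we get $x_{J\cup\{u\}}+x_{J\cup\{v\}}\le x_J$. Dividing by $x_w$ yields exactly $x^{|w}_{I\cup\{u\}}+x^{|w}_{I\cup\{v\}}\le x^{|w}_I$.

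It remains to check nonnegativity and the normalization. Nonnegativity of $x^{|w}_I$ for $|I|\le r$ is immediate from $x_{I\cup\{w\}}\ge 0$, since $|I\cup\{w\}|\le r+1$. The normalization is $x^{|w}_\emptyset = x_{\{w\}}/x_w = 1$.

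The only step requiring care is the bookkeeping of set sizes and the possibility that $w\in I$ or $u=w$. In those cases, unions collapse; for example, $I\cup\{w\}=I$ when $w\in I$. The identities above still hold literally, because the variables are indexed by sets, so no case distinction is needed. One must also make sure that every set appearing has size at most $r+1$ in $Q^{(r)}$. This holds since $|J\cup\{u\}|\le r+1$.
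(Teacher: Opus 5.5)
Your proof is correct and follows the paper's argument: apply each constraint of $Q^{(r)}$ with index set $I\cup\{w\}$ (of size at most $r$), then divide by $x_w>0$. The paper writes this out only for the edge constraints~\eqref{lp:edge} and says the others are analogous, so your explicit treatment of normalization, nonnegativity, and the degenerate cases $w\in I$ or $u=w$ just fills in what the paper leaves implicit.
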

\begin{proof}
	We verify this only for constraints of type~\eqref{lp:edge}, but the same argument works for all other types of constraints as well.
	Let $i\in\{1,\dotsc,\LH-1\}$, $(u,v)\in (V_i\times V_{i+1})\setminus A$, and $I\subseteq V$ with $|I|\le r-1$.
	Because $\vec x\in Q^{(r)}$ and it satisfies~\eqref{lp:edge} with $I' = I\cup \{w\}$, we have
	\begin{equation*}
		x_{I\cup\{w,u\}} + x_{I\cup\{w,v\}} \le x_{I\cup\{w\}} \mperiod
	\end{equation*}
	Multiplying both sides by $\sfrac{1}{x_{w}}$, we obtain
	\begin{equation*}
		x^{|w}_{I\cup\{u\}} + x^{|w}_{I\cup\{v\}} = \frac{x_{I\cup\{w,u\}}}{x_w} + \frac{x_{I\cup\{w,v\}}}{x_w} \le \frac{x_{I\cup\{w\}}}{x_w} = x^{|w}_{I} \mperiod
	\end{equation*}
	Therefore, $\vec x^{|w}$ satisfies~\eqref{lp:edge}.
\end{proof}
Based on this conditioning property, we now define a natural recursive randomized rounding algorithm.
The algorithm has a parameter $d\in \Z_{\ge 2}$ that controls the degree of the recursion. For lower values
of $d$ we obtain a better approximation ratio at the cost of a higher running time. We assume from here on that $\LH$
is a power of $d$, that is, $\LH = d^r$ for some $r\in \ZZ$. This value of $r$ will be the parameter
that we use in the linear program to define $Q^{(r)}$.
Given a point $\vec x\in Q^{(r)}$, the rounding algorithm returns a random layer-spanning path $P$ whose distribution has useful properties linked to $\vec x$.

This approach will allow us to later add additional constraints to $Q^{(r)}$ that are satisfied by $\1^{(r)}_{\opt}$ for the optimal solution $\opt$.
These constraints will lead to points $\vec x$ with additional properties that translate into distributional properties of $P$ that are crucial for our applications.
Moreover, as we see later, the LP-based approach also allows us to add constraints through a round-or-cut procedure for obtaining good approximation guarantees for submodular maximization.

\subsection{Recursive randomized rounding}

Let $G=(V,A)$ be a layered graph with $V = V_1 \,\cup \cdots \cup\, V_{\LH}$, $\LH = d^{r}$, and let $\vec x\in Q^{(r)}$.
We now describe our recursive randomized rounding algorithm to round a point $\vec x\in Q^{(r)}$ to a random layer-spanning path $P$.
To present the algorithm, we use the following notation.
For some set of layer indices $L\subseteq \{1,2,\dotsc,\LH\}$, we let $G[L]$ be the induced subgraph on $V[L] \coloneqq \bigcup_{i\in L} V_{i}$.
Moreover, we denote by $\vec x[L]$ the vector $\vec x$ restricted to variables of the vertex sets fully contained in $G[L]$.
We will consider only layer indices $L$ corresponding to consecutive sets of layers.
Our recursive randomized rounding algorithm is described in \Cref{alg:recursive_rounding}.

\begin{algorithm2e}
\DontPrintSemicolon
	\caption{Recursive randomized rounding (RRR)}\label{alg:recursive_rounding}
	\KwIn{Layered orienteering graph $G=(V,A)$ with $V=V_1\cup \cdots \cup V_{\LH}$, $\vec x\in Q^{(r)}$, $d^r = \LH$.}
\KwOut{Layer-spanning path in $G$.}

\uIf{$\LH=1$}{
	Sample $v\in V_1$ with $\Pr[v = w] = x_{w}$ for all $w\in V_1$. \;
	\Return $(v)$.\label{algline:hIsZero}
}
\Else{
	Let $L_i = \{(i-1) \frac{\LH}{d}+1, (i-1)\frac{\LH}{d}+2,\dotsc,i\frac{\LH}{d}\}$ for each $i\in\{1,2,\dotsc,d\}$. \;
	Recursively construct layer-spanning path $P_{1}$ in $G[L_1]$ from $\vec x[L_1]$ projected to $\R^{\binom{V}{\le r}}$.\;
	\For{$i=2,3,\dotsc,d$}{
		Let $u$ be the last vertex in $P_{i-1}$.\;
		Recursively construct layer-spanning path $P_{i}$ in $G[L_i]$ from $\vec x^{|u}[L_i]$.\;
	}
	\Return concatenation $P_1\circ P_2\circ\cdots \circ P_{d}$.
}
\end{algorithm2e}
Given a point $\vec x \in Q^{(r)}$,
denote by $\RRR(\vec x)$ the output distribution of \Cref{alg:recursive_rounding} on $\vec x$. 

Our recursive construction of a walk is somewhat reminiscent of the Recursive Greedy algorithm by \textcite{chekuri2005recursive}, but there are some important differences.
Apart from the obvious difference that we are LP-based, our algorithm, when recursively constructing parts of the path, never needs to condition on vertices that lie in the future, i.e., that come later in the path.
(We will make this precise in \Cref{sec:mdp}.)
This is crucial for later obtaining policies for MDPs, which are not allowed to depend on future states.
Conversely, the Recursive Greedy algorithm needs to condition on vertices that lie in the future.
For example, the first step is to guess the middle vertex of the path, which is a vertex that lies in the future for the first half of the path.
And this step is repeated recursively in the algorithm of \textcite{chekuri2005recursive}.

We also want to highlight that our parameter $d$ chops the path into $d$ pieces, which are then constructed recursively and in particular \emph{sequentially}.
This is different from guessing $d$ vertices of the path simultaneously, which would lead to both a much higher running time and a significant dependence on future vertices.
This technique of guessing several vertices simultaneously was discussed in the Recursive Greedy algorithm by \textcite{chekuri2005recursive} (see Section~3.3), to get slightly improved approximation ratios.

We first observe that the different steps of \Cref{alg:recursive_rounding} are well-defined and indeed lead to
a layer-spanning path being sampled.
We also prove that $\RRR(\vec x)$ is marginal-preserving on singleton variables.
\begin{lemma}\label{lem:marginal-preserving}
Consider a layered graph $G=(V,A)$ with $V = V_1 \cup \cdots \cup V_{\LH}$, $\LH = d^r$, and let $\vec x\in Q^{(r)}$.
Then recursive randomized rounding applied to $\vec{x}$
\begin{enumerate}[nosep, label=(\roman*)]
		\item is well defined;
		\item returns a layer-spanning path; and
		\item the output path $P$ satisfies $\Pr[v\in P] = x_{v}$ for all $v\in V$.\label{en:marginals}
	\end{enumerate}
\end{lemma}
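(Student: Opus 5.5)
We argue by induction on $r$, proving (i)–(iii) together for every layered graph with $\LH=d^r$ and every $\vec x\in Q^{(r)}$.

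\textbf{Base case $r=0$.} Here $\LH=1$. Constraint~\eqref{lp:layer} with $I=\emptyset$ gives $\sum_{w\in V_1}x_w=x_\emptyset=1$, and all $x_w\ge 0$. So the sampling step is a valid distribution. A single vertex of $V_1$ is a layer-spanning path, and $\Pr[v\in P]=x_v$ by construction.

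\textbf{Well-definedness of the recursive calls.} Let $r\ge 1$. First, projecting $\vec x$ to $\R^{\binom{V}{\le r}}$ (dropping the variables of size $r+1$) gives a point of $Q^{(r-1)}$. Indeed, every constraint of $Q^{(r-1)}$ is a constraint of $Q^{(r)}$ with $|I|\le r-1$. Second, restricting to $L_i$ keeps feasibility for $Q^{(r-1)}_{G[L_i]}$. The constraints of $G[L_i]$ are exactly those of $G$ that only involve layers in $L_i$, with $I\subseteq V[L_i]$. Third, the recursive call for $P_1$ therefore receives a point of $Q^{(r-1)}_{G[L_1]}$ with $|L_1|=d^{r-1}$. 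For $i\ge 2$ we need $x_u>0$ for the last vertex $u$ of $P_{i-1}$. I will carry this along the induction as a strengthened marginal statement (see below). Given $x_u>0$, \Cref{lem:condition} gives $\vec x^{|u}\in Q^{(r-1)}$, and restricting to $L_i$ is again feasible.

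\textbf{Layer-spanning.} By induction each $P_i$ is layer-spanning in $G[L_i]$. We still need the arc from the last vertex $u$ of $P_{i-1}$ (in layer $V_{(i-1)\LH/d}$) to the first vertex $v$ of $P_i$. By the inductive marginal property applied to the call on $\vec x^{|u}[L_i]$, the vertex $v$ occurs with probability $x^{|u}_v=x_{u,v}/x_u$. If $(u,v)\notin A$, then~\eqref{lp:edge} with $I=\{u\}$ gives $x_{u,u}+x_{u,v}\le x_u$, i.e.\ $x_u+x_{u,v}\le x_u$ using $I\cup\{u\}=\{u\}$. Hence $x_{u,v}=0$, so such $v$ is chosen with probability $0$. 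Strictly, I will strengthen the induction to say that every vertex output has positive marginal. This holds because a vertex $w$ is output with probability exactly $x_w$, so outputting it means $x_w>0$. This also supplies $x_u>0$ in the previous step.

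\textbf{Marginals, and the main obstacle.} The main obstacle is (iii) for layers in $L_i$ with $i\ge 2$, since the distribution there is a mixture over the random vertex $u$. For $v\in V[L_1]$, (iii) follows directly from induction, since the projection keeps the singleton values. For $i\ge 2$, I prove by induction on $i$ that $\Pr[u=w]=x_w$ for every $w$ in the last layer of $L_{i-1}$; this is the $v$-case applied one block earlier. Then for $v\in V[L_i]$,
\[\Pr[v\in P]=\sum_{w:\,x_w>0}\Pr[u=w]\,x^{|w}_v=\sum_{w}x_{w,v}=x_v,\]
where the last step uses~\eqref{lp:layer} with $I=\{v\}$ over the layer of $w$, together with symmetry of set indices. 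The terms with $x_w=0$ contribute $0$ because $x_{w,v}\le x_w$. Conditioning on the whole history rather than only on $u$ does not matter, since the call for $P_i$ depends only on $u$.
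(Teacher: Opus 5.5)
Your proposal is correct and follows the paper's proof essentially step by step. Both use induction on the number of layers, projection and restriction for the first call, \Cref{lem:condition} plus a nested induction over blocks giving $\Pr[u=w]=x_w$, the total-probability computation closed by \eqref{lp:layer} with $I=\{v\}$, and \eqref{lp:edge} forcing $x_{u,v}=0$ for non-arcs. Your explicit derivation of $x_{u,v}=0$ from \eqref{lp:edge} with $I=\{u\}$, and your handling of the terms with $x_w=0$ via $x_{w,v}\le x_w$, simply spell out details that the paper leaves implicit.
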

\begin{proof}
We show the statement by induction over $\LH$.
	If $H = 1$, then by~\eqref{lp:layer} we have $\sum_{w\in V_1} x_{w} = x_\emptyset = 1$.
	Thus, the variables $\{x_{w}\}_{w\in V_1}$ indeed define a probability distribution for $v$.
	The output is a single vertex, which is layer-spanning in this case, and it satisfies~\ref{en:marginals} by definition.

	Now assume that $H \ge d$ and let $L_i$ be defined as in the algorithm.
	To be able to make the first recursive call, we need to verify that $\vec x[L_1]$ projected to $\R^{\binom{V_1}{\le r}}$ is in $Q^{(r-1)}_{G[L_1]}$.
	This holds trivially, because all constraints in $Q^{(r-1)}_{G[L_1]}$ are also valid for $Q^{(r)}_{G}$.
	By the induction hypothesis, a vertex $v\in V[L_1]$ is in $P_1$ and therefore
	in $P$ with probability
	$x[L_1]_v = x_v$.
	Thus,~\ref{en:marginals} holds for such vertices. 

	Next, assume that $v\in V[L_i]$ for some $i\ge 2$.
	The path returned by recursive randomized rounding in $G$ is $P = P_1\circ P_2\circ\cdots\circ P_d$. We can assume (via a nested induction) that each $w\in V_{\sfrac{(i-1)\LH}{d}}$ is the last vertex of $P_{i-1}$ with probability $\Pr[w = u] = x_{w}$ and, in particular, $x_u > 0$ if $u$ is the last vertex of $P_{i-1}$.
	By \Cref{lem:condition}, we have $\vec x^{|u}[L_i] \in Q^{(r-1)}_{G[L_i]}$ so the $i$th recursive call
	returning path $P_{i}$ is also well-defined.
	Furthermore,
\begin{align*}
	\Pr[v\in P] = \Pr[v\in P_{i}] &= \sum_{w\in V_{\sfrac{(i-1)\LH}{d}}} \Pr[w = u] \cdot \Pr[v\in P_{i} \mid w = u] \\
	&= \sum_{w\in V_{\sfrac{(i-1)\LH}{d}}} x_{w} \cdot \Pr[v\in P_{i} \mid w = u] \\
	&= \sum_{w\in V_{\sfrac{(i-1)\LH}{d}}} x_{w} \cdot x^{|w}_{v} = \sum_{w\in V_{\sfrac{(i-1)\LH}{d}}} x_{w,v} = x_{v} \mcomma
\end{align*}
where the last equality follows from~\eqref{lp:layer}.
	Note that for all $v\in V_{1+\sfrac{(i-1)\LH}{d}}$ with $(u,v)\notin A$, we have from~\eqref{lp:edge} that
	$x_{u,v} = 0$ and, in particular,
	$x^{|u}_{v} = 0$.
	Thus, by the induction hypothesis, we have that
	\begin{itemize}[nosep]
		\item $P_{i}$ cannot contain such a vertex and,
		\item $P_{1},\dotsc,P_d$ are all paths.
	\end{itemize}
	By the former, there is an arc from the last vertex of $P_{i-1}$ to the first vertex of $P_{i}$, which, together with the latter, implies that $P$ is indeed a path from $V_1$ to $V_{\LH}$.
\end{proof}

As discussed, our main goal is to use an LP-based approach to deal with problems that one can reduce to Submodular Orienteering with constraints.
Whereas the recursive randomized rounding algorithm described in \Cref{alg:recursive_rounding} shows how we can round a point $\vec x\in Q^{(r)}$ to a random layer-spanning path $P$, we still lack a way to make sure that $P$ has a large submodular value.
We address this next.

\subsection{Approximating submodular objectives by linear functions}
Our key ingredient to deal with submodular objectives is the following lemma.
It shows that there is a randomized procedure that not only returns a random path $P\sim \RRR(\vec x)$, but also returns a linear function $\ell$ that upper bounds $f$ on all layer-spanning paths and such that the expected value of $f(P)$ is related to $\E[\ell(\vec x)]$.
It is important to note that both $P$ and $\ell$ are random objects that depend on the random choices made by the algorithm, and that the expectation is taken over these random choices.

	For $I\subseteq V$ with $\Pr_{P\sim\RRR(\vec x)}[I\subseteq P] > 0$, let $\RRR^{|I}(\vec x)$ be the distribution $\RRR(\vec x)$ conditioned on $I$ being contained in the path.
\begin{lemma}\label{lem:round-or-cut}
	Let $d\in \Z_{\ge 2}$ and $r\in \ZZ$ so that $\LH = d^r$.
	Let $\vec x\in Q^{(r)}$ and $f: 2^V\rightarrow \RR$ be monotone submodular. 
	There is a randomized $n^{O(r)}$ time algorithm returning one random path $P_{t} \sim \RRR^{|t}(\vec x)$ for each $t\in V_{\LH}$ with $x_t > 0$, and in addition a linear function $\ell : \R^{V \choose \le r+1} \rightarrow \R$ such that 
		\begin{itemize}[nosep]
			\item The random path $\overline{P}$, which we set equal to $P_t$ with probability $x_t$ for each $t\in V_{\LH}$ with $x_t > 0$, is distributed as $\RRR(\vec x)$,
			\item $\displaystyle\E[f(\overline{P})]=\sum_{\substack{t\in V_{\LH}: \\ x_t > 0}} x_{t} \cdot \E[f(P_{t})] \ge \frac{\E[\ell(\vec x)]}{(d-1)(r+1)}$,
		\item $\displaystyle\ell(\1^{(r)}_{P'}) \ge f(P')$ for all layer-spanning paths $P'$ with probability~$1$, and
		\item $\ell(\vec y) = \sum_{I\in \binom{V}{\le r+1}} \ell_I \cdot y_I$ has coefficients $\ell_I\in [0, f(V)]$ $\;\forall I\in\binom{V}{\le r+1}$.
		\end{itemize}
\end{lemma}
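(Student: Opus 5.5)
The plan is to prove a stronger, recursively structured statement by induction on $r$, with ratio $\alpha_r \coloneqq 1+(d-1)r \le (d-1)(r+1)$. The statement will be allowed to depend on an arbitrary monotone submodular $g$ (we will apply it to marginal functions $f_S$). The algorithm is an \emph{explicit tree of runs} of \Cref{alg:recursive_rounding}. At every split into $d$ pieces, for every possible boundary vertex $w$ with $x_w>0$ (the last layer of a piece), we run the next piece on $\vec x^{|w}$, which lies in the right polytope by \Cref{lem:condition}. Thus we do not sample a single $u$.

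This tree has $(dn)^{O(r)} = n^{O(r)}$ nodes. Following it from a fixed final vertex $t\in V_{\LH}$ backwards gives the path $P_t$. To make $P_t\sim\RRR^{|t}(\vec x)$ and make $\overline P$ distributed as $\RRR(\vec x)$, we choose the boundary vertices along the branch by Bayes' rule. We can compute the exact probabilities $\Pr[\text{boundary}=w,\ t\in P]$ by dynamic programming over the tree, using the marginal formula from \Cref{lem:marginal-preserving}. The first bullet then follows from \Cref{lem:marginal-preserving}.

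\textbf{Base case} ($r=0$, $\LH=1$). Set $\ell(\vec y)=\sum_{v\in V_1} g(\{v\})\, y_v$. Then $\E[g(\overline P)]=\sum_v x_v g(\{v\})=\ell(\vec x)$, the upper bound is tight on single-vertex paths, and the coefficients lie in $[0,g(V)]$ by monotonicity. So $\alpha_0=1$.

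\textbf{Inductive step.} Consider a fixed layer-spanning path $P'$ with pieces $P'_1,\dots,P'_d$ and boundary vertices $w_1,\dots,w_{d-1}$. Follow the branch of the run tree determined by $w_1,\dots,w_{d-1}$, and let $S_{<j}$ be the union of the pieces produced along that branch before piece $j$. Submodularity and monotonicity then give
\begin{equation*}
f(P')\;\le\; f(S_{<d}) + \sum_{j=1}^{d} f_{S_{<j}}(P'_j).
\end{equation*}
Each term $f_{S_{<j}}(P'_j)$ is bounded by the inductive linear function $\ell_j^{(w_{j-1})}$, obtained from the recursive call on $G[L_j]$ with $g=f_{S_{<j}}$ and point $\vec x^{|w_{j-1}}$. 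We lift it to $\R^{\binom{V}{\le r+1}}$ by replacing each $y_I$ with $y_{I\cup\{w_{j-1}\}}$. On integral points $\1^{(r)}_{P'}$ this lift evaluates to $\ell_j^{(w_{j-1})}(\1^{(r-1)}_{P'_j})$ when $w_{j-1}\in P'$ and to $0$ otherwise, so summing over all $w$ selects exactly the branch of $P'$.

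The constant-like term $f(S_{<d})$ depends on the branch. Telescoping along the branch, $f(S_{<d})=\sum_{j=2}^{d} f_{S_{<j-1}}(P_{j-1})$, and each summand we charge via a coefficient times $y_{\{w_{j-1}\}\cup\dots}$. This produces $d-1$ extra groups of terms. Each group has expectation at most $\E[f(\overline P)]$ when evaluated at $\vec x$, because $x_{I\cup\{w\}}=x_w x^{|w}_I$ turns the evaluation into an expectation over the sampled boundary vertex. The recursive terms sum to at most $\alpha_{r-1}\sum_j \E[f_{S_{<j}}(P_j)] = \alpha_{r-1}\E[f(\overline P)]$ by telescoping. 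Together this gives $\alpha_r=\alpha_{r-1}+(d-1)$, hence $\alpha_r = 1+(d-1)r$. The coefficient bound $[0,f(V)]$ should follow because each variable $y_I$ receives contributions of marginal values of disjoint-ish sets bounded by $f(V)$. I would verify this by checking that the branches contributing to a fixed $I$ are determined by $I$'s boundary vertices.

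\textbf{Main obstacle.} The hard part is the bookkeeping in this step. We must make sure that the bound $\ell(\1^{(r)}_{P'})\ge f(P')$ uses, for every $P'$, the run consistent with $P'$'s own boundary vertices, and that the products $y_{I\cup\{w_1,\dots\}}$ stay within size $r+1$. I would try to charge each level of recursion only one conditioning vertex, namely the one immediately preceding the piece, rather than all $d-1$. This is what keeps the degree at $r+1$, and it is what forces the constant terms to be charged level by level, producing the additive $(d-1)$ rather than a multiplicative loss.
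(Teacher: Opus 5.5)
Your target recursion $\alpha_r=\alpha_{r-1}+(d-1)$ is achievable, but the construction of $\ell$ you describe does not achieve it. The gap is exactly the obstacle you flag, and the fix you sketch (telescoped increments, one conditioning vertex per piece) cannot work. The call for piece $j$ at $w_{j-1}$ is supposed to use $g=f_{S_{<j}}$, where $S_{<j}$ consists of the pieces along the branch $w_1,\dots,w_{j-1}$. For $d\ge 3$ this depends on $w_1,\dots,w_{j-2}$ as well, not just on $w_{j-1}$. This leaves two options, and both fail.

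\emph{One call per full prefix.} You get about $n^{d-1}$ calls per level, which is superpolynomial for $d=n^{\epsilon}$ and contradicts your $(dn)^{O(r)}$ count. Moreover, the lift by $y_{I\cup\{w_{j-1}\}}$ switches on every prefix ending in $w_{j-1}$ at once, so $\E[\ell(\vec x)]$ overcounts. Isolating the branch would need $y_{I\cup\{w_1,\dots,w_{j-1}\}}$, which breaks the degree bound $r+1$.

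\emph{One shared call per $w_{j-1}$.} This is what your count and the Bayes back-sampling suggest. Now the call is relative to a single prefix ending at $w_{j-1}$, and the prefixes attached to $w_1,\dots,w_{d-1}$ of a given $P'$ are not nested. So your inequality $f(P')\le f(S_{<d})+\sum_j f_{S_{<j}}(P'_j)$ is unavailable, and charging increments fails. Concretely, take $\LH=d=3$ and $P'=(w_1,w_2,p_3)$. Let the prefix of $w_2$ be $(u_2,w_2)$ with $u_2\ne w_1$, and let $f$ be a coverage function in which only $u_2$ and $p_3$ cover a set $X$. Then every increment and every recursive term vanishes on $\1^{(1)}_{P'}$, while $f(P')=|X|$. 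As a sanity check, increments telescope in expectation, so if they sufficed you would get $\alpha_r=1+r$ for every $d$. For example, $r=1$, $d=\LH$ would give a $2$-approximation in polynomial time, which is far too good to be true.

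The missing idea, used in the paper, is to give up consistency with $P'$ altogether.
\begin{itemize}
\item Keep one random prefix $P^{|w}$ per boundary vertex $w$: sample the previous boundary vertex $u$ with probability $x^{|w}_u$ and set $P^{|w}=P^{|u}\circ P^{|u,w}$.
\item Run the next piece on $\vec x^{|w}$ with $f(\,\cdot\mid P^{|w})$.
\item Charge the \emph{full} prefix value, $\ell_w(\vec y)=y_w f(P^{|w})+\ell^{|w}$ lifted by $y_I\mapsto y_{I\cup\{w\}}$.
\end{itemize}
The upper bound then only uses $f(P')\le\sum_j f(P'_j)$ and $f(P'_{j+1})\le f(P^{|w_j})+f(P'_{j+1}\mid P^{|w_j})$, both valid for any prefix. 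The constant charges at boundary layer $i$ have expectation $\sum_{w\in V_{i\LH/d}}x_w\E[f(P^{|w})]\le\E[f(\overline P)]$, because prefixes only grow. This, not telescoping, is where the additive $d-1$ comes from, and with $\alpha_0=1$ it does give your $1+(d-1)r$.

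Two smaller omissions. First, boundary vertices with $x_w=0$ get no term in your $\ell$, so the upper bound fails for paths through them; add $f(V)\,y_w$ for each such $w$. Second, the coefficient bound is easiest obtained by truncating each $\ell_I$ to $\min\{f(V),\ell_I\}$. This preserves the other properties because all coefficients are non-negative.
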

\begin{proof}
	The first point is a consequence of $P_t \sim \RRR^{|t}(\vec x)$ and $\Pr_{P\sim\RRR(\vec x)}[t\in P] = x_t$, which
	holds by \Cref{lem:marginal-preserving}.
	For the next two points, we argue by induction over $\LH$.
	For simplicity, we assume that $x_v > 0$ for each $v\in V$. We discuss at the end of the proof
	how to handle vertices with $x_v = 0$.

	For $\LH = 1$, set $P_{t}$ to $(t)$ with probability $1$ for each $t\in V_1$.
	We return $\ell(\vec y) = \sum_{t\in V_1} y_{t} \cdot f(\{t\})$, which satisfies the
	claim because $\ell(\vec x) = \E_{P\sim \RRR(\vec x)}[f(P)]$ and $(d-1)(r+1)\ge 1$.

	Now assume that $\LH \ge d$ and, as in the algorithm, define $L_i = \{(i-1) \frac{\LH}{d}+1, (i-1)\frac{\LH}{d}+2,\dotsc,i\frac{\LH}{d}\}$ for each $i\in\{1,2,\dotsc,d\}$.
		We write $L_{\le i} = L_1 \cup \cdots \cup L_i$.

	\paragraph*{Subpaths and linear functions.}
	Using the induction hypothesis we will carefully construct the following random objects.
		For each $i\in\{1,2,\dotsc,d\}$ and $v\in V_{\sfrac{i\LH}{d}}$, we will construct a path $P^{|v}$ distributed as paths sampled from $\RRR^{|v}(\vec x)$ and then restricted to $G[L_{\le i}]$.
		The random paths of the statement are then defined as $P_{t} = P^{|t}$ for each $t\in V_{\LH}$.

		In the process of constructing the paths $P^{|v}$, we also construct, for each $i\in\{2,3\dotsc,d\}$, $u\in V_{\sfrac{(i-1)\LH}{d}}$, and $v\in V_{\sfrac{i\LH}{d}}$ with $x_{u,v} > 0$, a path $P^{|u,v}$ distributed as $\RRR^{|u,v}(\vec x)$ restricted to $G[L_i]$.
		Here, the submodular functions that we invoke the induction hypothesis with play a crucial role.

	First, via the induction hypothesis on $G[L_1]$ from $\vec x[L_1]$ with $f$,
	construct upper bound $\ell_{\bot}$ and paths $P^{|v}$ from $V_1$ to $v$ for each $v\in V_{\sfrac{\LH}{d}}$.

		Then for each $i\in\{2,3\dotsc,d\}$ (in that order): For each $w\in V_{\sfrac{(i-1)\LH}{d}}$ apply the induction hypothesis on $G[L_{i}]$ and $\vec x^{|w}[L_{i}]$ with $f(\, \cdot \mid P^{|w})$~\footnote{Here, $f(S \mid T) = f(S\cup T) - f(T)$ denotes the marginal gain of $S$ when added to $T$.} to obtain an upper bound $\ell^{|w}$ and paths $P^{|w,v}$ for each $v\in V_{\sfrac{i\LH}{d}}$ with $x_{w,v} > 0$ (equivalently, $x^{|w}_v > 0$).
		Now, conversely, for each $v\in V_{\sfrac{i\LH}{d}}$, we sample $u\in V_{\sfrac{(i-1)\LH}{d}}$
		with $\Pr[u = w] = x^{|v}_w$ for each $w\in V_{\sfrac{(i-1)\LH}{d}}$.
		Then we set $P^{|v}$ as the concatenation of $P^{|u}$ and $P^{|u,v}$.

		The fact that $P^{|v}$ is distributed as paths sampled from $\RRR^{|v}(\vec x)$ and then restricted to $G[L_{\leq i}]$ follows from
	\begin{align*}
		\Pr_{P \sim \RRR^{|v}(\vec x)}[w\in P]
		&= \Pr_{P \sim \RRR(\vec x)}[w\in P \mid v\in P]
		\\
		&= \frac{\Pr_{P \sim \RRR(\vec x)}[w\in P]
		\cdot \Pr_{P \sim \RRR(\vec x)}[v\in P \mid w\in P]}
		{ \Pr_{P \sim \RRR(\vec x)}[v\in P]} \\
		&= \frac{x_w \cdot x^{|w}_{v}}{x_v} = \frac{x_{w,v}}{x_v} = x^{|v}_w \mperiod
	\end{align*}
	For all linear functions defined above, we extend their domain to $\R^{\binom{V}{\le r+1}}$. Formally,
	we apply them to the projection to their respective domain.

	Let $i\in\{1,\dotsc,d-1\}$ and $w\in V_{\sfrac{i\LH}{d}}$.
	One complication that we face is that $\ell^{|w}$ is not an upper bound for $f(\, \cdot\, )$, but for $f(\, \cdot \mid P^{|w})$, and that it is linear in $\vec x^{|w}$ and not $\vec x$. To fix this, consider $\ell_{w}\colon \R^{\binom{V}{\le r+1}}\rightarrow \R$ defined by
	\begin{equation}
		\ell_{w}(\vec y) = y_w \cdot f(P^{|w}) + \ell^{|w}(\vec y^{+w}) \mcomma
	\end{equation}
	where $y^{+w}_I = y_{I\cup\{w\}}$ for all $I\in \binom{V}{\le r}$.
	Then
	every path $P'$ from $w$ to $V_{\sfrac{(i+1)\LH}{d}}$ satisfies
	\begin{equation}\label{eq:upper-bound}
		\ell_{w}(\1^{(r)}_{P'}) = f(P^{|w}) + \ell^{|w}(\1^{(r)}_{P'}) \ge f(P^{|w}) + f(P' \mid P^{|w}) \ge f(P') \mperiod
	\end{equation}
	Adding all these linear functions will yield our upper bound for $f$. We define
	\begin{equation*}
		\ell(\vec y) = \ell_{\bot}(\vec y) + \sum_{i=1}^{d-1} \sum_{w\in V_{\sfrac{i\LH}{d}}} \ell_{w}(\vec y) \mperiod
	\end{equation*}
	Next, we show that $\ell$ has the claimed properties.
	\paragraph*{Expected values of paths.}
	We will now lower bound
	$\sum_{t\in V_{\LH}} x_t \cdot \E[f(P^{|t})]$.
	Let $i\in\{2,3,\dotsc,d\}$. Then
	\begin{align}
		\sum_{v\in V_{\sfrac{i\LH}{d}}} &x_v \cdot \E[f(P^{|v})] - \left(1-\frac{1}{(d-1)r}\right)\sum_{w\in V_{\sfrac{(i-1)\LH}{d}}} x_{w} \cdot \E[f(P^{|w})]\notag\\
		 &= \sum_{v\in V_{\sfrac{i\LH}{d}}} x_v \sum_{\substack{w\in V_{\sfrac{(i-1)\LH}{d}} \\ x_{\{w,v\}} > 0}} x^{|v}_w \cdot (\E[f(P^{|w})] + \E[f(P^{|w,v} \mid P^{|w})]) \notag\\ 
		 &\quad - \left(1-\frac{1}{(d-1)r}\right)\sum_{w\in V_{\sfrac{(i-1)\LH}{d}}} x_{w} \cdot \E[f(P^{|w})] \notag\\
		 &= \sum_{w\in V_{\sfrac{(i-1)\LH}{d}}} x_w \sum_{\substack{v\in V_{\sfrac{i\LH}{d}} \\ x_{\{w,v\}} > 0}} x^{|w}_v \cdot (\E[f(P^{|w})] + \E[f(P^{|w,v} \mid P^{|w})]) \notag\\ 
		 &\quad - \left(1-\frac{1}{(d-1)r}\right)\sum_{w\in V_{\sfrac{(i-1)\LH}{d}}} x_{w} \cdot \E[f(P^{|w})] \notag\\
		 &= \sum_{w\in V_{\sfrac{(i-1)\LH}{d}}} x_w \bigg(\frac{\E[f(P^{|w})]}{(d-1)r} + \sum_{\substack{v\in V_{\sfrac{i\LH}{d}} \notag\\ x_{\{w,v\}} > 0}} x^{|w}_v \cdot \E[f(P^{|w,v} \mid P^{|w})]\bigg) \notag\\
		 &\ge \sum_{w\in V_{\sfrac{(i-1)\LH}{d}}} \frac{x_{w}}{(d-1)r} \left(\E[f(P^{|w})] + \E[\ell^{|w}(\vec x^{|w})]\right) \notag\\
		 &= \frac{1}{(d-1)r} \E\left[\sum_{w\in V_{\sfrac{(i-1)\LH}{d}}} \ell_{w}(\vec x)\right] \mcomma\label{eq:expansionEfP}
	\end{align}
	where the inequality follows from the induction hypothesis applied to the construction of $\ell^{|w}$ and the paths $\{P^{|w,v}\}_{v\in V_{\sfrac{iH}{d}}}$. (More precisely, we use the second point of \cref{lem:round-or-cut}.)

	By summing~\eqref{eq:expansionEfP} over all $i\in\{2,3,\dotsc,d\}$, we obtain
	\begin{align*}
		\sum_{t\in V_{\LH}} x_t \cdot \E[f(P^{|t})] &\ge \sum_{v\in V_{\sfrac{\LH}{d}}} x_v \cdot \E[f(P^{|v})] \\
		&\quad + \frac{1}{(d-1)r}\sum_{i=1}^{d-1} \left[\E\left[\sum_{w\in V_{\sfrac{i\LH}{d}}} \ell_{w}(\vec x)\right] - \sum_{w\in V_{\sfrac{i\LH}{d}}} x_w \cdot \E[f(P^{|w})]\right] \\
		&\ge \frac{1}{(d-1)r}\E[\ell_{\bot}(\vec x)] - \frac{1}{r}\sum_{t\in V_{\LH}} x_t \cdot \E[f(P^{|t})] + \frac{1}{(d-1)r}\sum_{i=1}^{d-1} \E\left[\sum_{w\in {V_{\sfrac{i\LH}{d}}}} \ell_{w}(\vec x)\right] \mperiod
	\end{align*}
	In the second inequality we use that $\sum_{w\in V_{\sfrac{i\LH}{d}}} x_w \cdot \E[f(P^{|w})] \le \sum_{t\in V_{\LH}} x_t \cdot \E[f(P^{|t})]$ for all $i$.
	By moving the terms related to $P^{|t}$ to the left and multiplying with $\frac{r}{r+1}$, it follows that
	\begin{align*}
		\sum_{t\in V_{\LH}} x_t \cdot \E[f(P^{|t})] &\ge \frac{1}{(d-1)(r+1)}\E[\ell_{\bot}(\vec x)] + \frac{1}{(d-1)(r+1)} \sum_{i=1}^{d-1} \E\left[\sum_{w\in V_{\sfrac{i\LH}{d}}} \ell_{w}(\vec x)\right] \\
		&= \frac{1}{(d-1)(r+1)} \E[\ell(\vec x)] \mperiod
	\end{align*}

	\paragraph*{Upper bound on any layer-spanning path.} It remains
	to show that $\ell$ upper bounds $f(P')$ for any layer-spanning path $P'$.
	For each $i\in\{1,2,\dotsc,d\}$,
	let $u_i$ be the unique vertex in the intersection of $P'$ and $V_{\sfrac{i\LH}{d}}$
	and let $P'_{i}$ be the restriction of $P'$ to $G[L_i]$.
	By induction and because $P'_{1}$ is layer-spanning in $G[L_1]$, we have that $\ell_{\bot}(\1^{(r)}_{P'}) \ge f(P'_{1})$.
	By~\eqref{eq:upper-bound}, we also have $\ell_{u_{i}}(\1^{(r)}_{P'}) \ge f(P'_{i+1})$ for each $i\in\{1,\dotsc,d-1\}$.
	We conclude that
	\begin{align*}
		\ell(\1^{(r)}_{P'}) &= \ell_{\bot}(\1^{(r)}_{P'}) + \sum_{i=1}^{d-1} \sum_{w\in V_{\sfrac{i\LH}{d}}} \ell_w(\1^{(r)}_{P'}) \\
		&= \ell_{\bot}(\1^{(r)}_{P'}) + \sum_{i=1}^{d-1} \ell_{u_{i}}(\1^{(r)}_{P'}) \ge f(P'_1) + \cdots + f(P'_d) \ge f(P') \mperiod
	\end{align*}
	\paragraph*{Concluding remarks.}
	Note that by construction, the coefficients of $\ell$ are conic combinations of polynomially many coefficients from recursive constructions with $r-1$, but we have not discussed their magnitude.
	We can replace each coefficient $\ell_I$ by $\min\{f(V),\ell_I\}$, since this maintains
	the other properties. This gives the upper bound on the coefficients.

	In the case that $x_v = 0$ for some $v\in V$, we can perform the construction above on the graph
	restricted to vertices with non-zero variables. Then, we add to the resulting function $f(V) \cdot y_v$
	for every $v\in V$ with $x_v = 0$. This does not affect the lower bounds on the expected value of
	$\RRR(\vec{x})$, but ensures that every path that uses vertices with zero $x$-value satisfies the lower bound on $\ell$.
\end{proof}

To highlight the main ideas of how \Cref{lem:round-or-cut} can be leveraged, we start with
a quick and instructive warm-up, and we will later exploit that our approach is LP-based by adding additional constraints to the linear description to deal with further applications.

\subsection{Round-or-cut algorithm for Submodular Orienteering}
\label{sec:main-orienteering}
We show how to use \Cref{lem:round-or-cut} to obtain an $O(dr)$-approximation for a decision variant of Submodular Orienteering on layered graphs, without a length bound and with the number of operations of the algorithm depending polynomially
on $\langle f \rangle$, the largest encoding size of a function value $f(S)$ with $S\subseteq V$.
Both of these limitations will be removed in \Cref{sec:orienteering-full}, where we derive \Cref{thm:main-orienteering}.

In the decision version,
for a given bound $T\in \mathbb{R}_{\geq 0}$, we either output a solution of value at least $\frac{T}{4(d-1)(r+1)}$ or determine that $f(\opt) < T$, where $\opt$ is an optimal solution.

We highlight that, in the interest of simplicity and clarity, we did not make any effort to optimize the constant factor in front of $dr$ in our analysis, which can easily be improved.

\Cref{alg:basic_round-or-cut} describes our round-or-cut procedure for Submodular Orienteering.
For any $T\in \mathbb{R}_{\geq 0}$, we define the truncated submodular function $f^T(S) \coloneqq \min\{f(S), T\}$ for all $S\subseteq V$.

\begin{algorithm2e}
\SetKwFor{Whenever}{whenever}{do}{end}
\SetKwFor{Loop}{loop}{}{end}
\DontPrintSemicolon
\caption{Round-or-cut for Submodular Orienteering}\label{alg:basic_round-or-cut}
\KwIn{Layered orienteering graph $G=(V,A)$ with $V=V_1 \cup \cdots \cup V_{\LH}$, $\LH = d^r$, monotone submodular function $f\colon 2^V \to \mathbb{R}_{\geq 0}$, and $T \in \mathbb{R}_{\geq 0}$.}
	\KwOut{layer-spanning path $P$ with $f(P)\geq \frac{T}{4(d-1)(r+1)}$ or determine that $f(\opt) < T$.}

Initialize and start ellipsoid method to find point in a polytope in $[0,1]^{\binom{V}{\leq r+1}}$ given by a separation oracle.\;

\Whenever{\label{algline:sep_call}Ellipsoid calls separation oracle with point $\vec x\in [0,1]^{\binom{V}{\leq r+1}}$}{
\uIf{$\vec x \not \in Q^{(r)}$}{pass a violated constraint of $Q^{(r)}$ to ellipsoid and continue with ellipsoid on line~\ref{algline:sep_call}.}\label{algline:checkQh}
\Else{
	\Loop{}{
		Use \Cref{lem:round-or-cut} with $f^T$ to get random layer-spanning path $P$ (corresponding to $\overline{P}$ in \Cref{lem:round-or-cut}) and random linear function $\ell$.\;
		\If{$f(P) \geq \frac{T}{4(d-1)(r+1)}$}{\Return{$P$}.}
		\If{$\ell(\vec x) < T$}{
			Pass cutting plane $\ell(\vec x) \geq T$ to ellipsoid and continue with ellipsoid on line~\ref{algline:sep_call}.}\label{algline:ellCut}
	}\label{algline:loop_basic_round-or-cut}
}
}
	\If{ellipsoid terminates because no vector satisfies generated cutting planes}{\Return{``$f(\opt) < T$''}.}

\end{algorithm2e}

We now briefly discuss correctness and running time of \Cref{alg:basic_round-or-cut}.
First note that there are only two ways for the algorithm to terminate: either it outputs a path $P$ with $f(P) \geq \frac{T}{4(d-1)(r+1)}$, or it outputs ``$f(\opt) < T$''.
Hence, if $f(\opt) < T$, then if the algorithm terminates, it returns a correct result.

Now consider the case that $T \le f(\opt)$.
Note that cutting planes generated in \Cref{alg:basic_round-or-cut} are valid for the vector $\1^{(r)}_{\opt}\in Q^{(r)}$.
This is clearly true for the constraints of $Q^{(r)}$, which get checked in line~\ref{algline:checkQh} of \Cref{alg:basic_round-or-cut}.
Moreover, a cutting plane $\ell(\vec x) \geq T$ generated in line~\ref{algline:ellCut} is valid because $\ell(\1^{(r)}_{\opt}) \geq f^T(\opt) = \min\{f(\opt), T\} = T$, where the first inequality follows from \Cref{lem:round-or-cut}.
Thus, also in this case, if the algorithm terminates, it returns a correct result.

It remains to discuss the running time of \Cref{alg:basic_round-or-cut}.
Note that checking whether $\vec x\in Q^{(r)}$ can be done in time $n^{O(r)}$ by checking all constraints of $Q^{(r)}$.
All other single operations in \Cref{alg:basic_round-or-cut} can be done in polynomial time.
The number of iterations of the ellipsoid method is polynomial in the dimension of the space and the bit complexity of the cuts.
The bit complexity of the constraints in $Q^{(r)}$ is polynomial, and the bit complexity of the cuts $\ell(\vec x) \geq T$ is also polynomial in $n^r$ and the maximum bit complexity of a function value of $f$.

Hence, to obtain an expected $n^{O(r)} \langle f \rangle^{O(1)}$ time algorithm, it suffices to show that the expected number of iterations of the inner loop in \Cref{alg:basic_round-or-cut} is polynomial, which we show next.

\begin{lemma}
	In every iteration of the inner loop in \Cref{alg:basic_round-or-cut}, the probability that the algorithm either outputs a path $P$ with $f(P) \geq \frac{T}{4(d-1)(r+1)}$ or generates a cutting plane $\ell(\vec x) \geq T$ is at least $\Omega(\frac{1}{dr})$.
\end{lemma}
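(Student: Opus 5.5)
Fix an iteration of the inner loop with the current point $\vec x\in Q^{(r)}$, and let $P$ and $\ell$ be the random path and random linear function obtained from \Cref{lem:round-or-cut} applied with $f^T$. Set $\alpha \coloneqq (d-1)(r+1)$. The iteration \emph{fails} exactly when both $f(P) < \frac{T}{4\alpha}$ and $\ell(\vec x)\ge T$. The plan is to show that failure has probability at most $1-\Omega(\frac{1}{dr})$. We do this with a Markov-type argument, combining the lower bound $\E[f^T(P)]\ge \E[\ell(\vec x)]/\alpha$ from \Cref{lem:round-or-cut} with the trivial bounds $f^T\le T$ and $f^T\le f$.

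First we lower bound $\E[f^T(P)]$ in terms of the probability $q\coloneqq\Pr[\ell(\vec x)\ge T]$. Since all coefficients of $\ell$ are nonnegative and $\vec x\ge 0$, we have $\ell(\vec x)\ge 0$. Hence $\E[\ell(\vec x)]\ge qT$, and therefore $\E[f^T(P)]\ge qT/\alpha$.

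Next we upper bound $\E[f^T(P)]$ using $f^T(P)\le T$ always, and $f^T(P)\le f(P) < \frac{T}{4\alpha}$ on the event that $P$ is not output:
\[
\E[f^T(P)] \le T\cdot \Pr\Bigl[f(P)\ge \tfrac{T}{4\alpha}\Bigr] + \tfrac{T}{4\alpha}.
\]
Let $p\coloneqq\Pr[f(P)\ge \frac{T}{4\alpha}]$. Combining the two bounds gives $p\ge \frac{q}{\alpha}-\frac{1}{4\alpha}$.

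Now we split into cases on $q$. If $q\le \frac12$, then with probability at least $\frac12$ we have $\ell(\vec x)<T$. In that event the algorithm either already returned $P$ or passes the cut, so the success probability is at least $\frac12$. If $q>\frac12$, then $p\ge \frac{1}{2\alpha}-\frac{1}{4\alpha}=\frac{1}{4\alpha}=\Omega(\frac{1}{dr})$. So in either case the success probability is $\Omega(\frac{1}{dr})$, which is what the statement claims.

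The only delicate point is making sure that $\E$ in \Cref{lem:round-or-cut} refers to the joint randomness of $P=\overline{P}$ and $\ell$ in a single invocation. The lemma is phrased that way: $\overline{P}$ is chosen by selecting $t$ with probability $x_t$ and outputting $P_t$, and the inequality is stated for $\E[f(\overline P)]$ against $\E[\ell(\vec x)]$. Beyond that, the argument is just the above averaging, and I expect no real obstacle.
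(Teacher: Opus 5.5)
Your proof is correct and follows essentially the same route as the paper. Both arguments combine Markov's inequality for the nonnegative quantity $\ell(\vec x)$ with the reverse-Markov bound coming from $f^T(P)\le T$, linked by $\E[f^T(P)]\ge \E[\ell(\vec x)]/((d-1)(r+1))$ from \Cref{lem:round-or-cut}. The only difference is that you case-split on whether $\Pr[\ell(\vec x)\ge T]$ exceeds $\frac12$, whereas the paper splits on whether $\E[f^T(P)]<\frac{T}{2(d-1)(r+1)}$; this gives you the bound $\frac{1}{4(d-1)(r+1)}$ in place of the paper's $\frac{1}{4(d-1)(r+1)-1}$.
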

\begin{proof}
	Let $\vec x\in Q^{(r)}$ be the point that is currently passed to the inner loop in \Cref{alg:basic_round-or-cut}, and let $P$ and $\ell$ be the random path and linear function obtained from \Cref{lem:round-or-cut} with respect to $\vec x$ and $f^T$.

	If $\E[f^T(P)] < \frac{T}{2(d-1)(r+1)}$, then \Cref{lem:round-or-cut} implies 
	\begin{equation*}
		\E[\ell(\vec x)] \le (d-1)(r+1) \cdot \E[f^T(P)] < \frac{T}{2} \mperiod
  \end{equation*}
	Thus, by Markov's inequality we get $\Pr[\ell(\vec x) < T] \ge \sfrac{1}{2}$, and the algorithm generates a cutting plane $\ell(\vec x) \geq T$ with probability at least $\sfrac{1}{2}$.

	Suppose now that $\E[f^T(P)] \geq \frac{T}{2(d-1)(r+1)}$, which implies
	\begin{align*}
		\frac{T}{2(d-1)(r+1)} &\le \E[f^T(P)] \\
		&\le \Pr\left[f^T(P) \ge \frac{T}{4(d-1)(r+1)}\right] \cdot T \\
		&\quad + \left(1 - \Pr\left[f^T(P) \ge \frac{T}{4(d-1)(r+1)}\right]\right) \cdot \frac{T}{4(d-1)(r+1)} \mperiod
	\end{align*}
	By rearranging we get
	\begin{equation*}
		\Pr\left[f^T(P) \ge \frac{T}{4(d-1)(r+1)}\right] \geq \frac{1}{4(d-1)(r+1) - 1} \mcomma
	\end{equation*}
	and the statement follows from $f(P) \geq f^T(P)$.
\end{proof}

Hence, in summary, we obtained an expected $n^{O(r)} \langle f \rangle^{O(1)}$ time $O(dr)$-approximation for the decision version of Submodular Orienteering (without a length bound) by using \Cref{lem:round-or-cut} in a round-or-cut procedure.

Consequently, \Cref{alg:basic_round-or-cut} returns an $O(dr)$-approximation for Submodular Orienteering if run with a $T$ such that $T \le f(\opt)$ and $T = \Omega(f(\opt))$.
Such a $T$, even with the property that $f(\opt) \geq T \geq (1-\varepsilon)f(\opt)$, can typically be found easily through binary search using \Cref{alg:basic_round-or-cut}.
See \Cref{sec:orienteering-full} for a more detailed discussion of this point in the more general setting with additional constraints.
This leads to approximation ratios of $O(\log n)$ and $O(\sfrac{n^\epsilon}{\epsilon})$ in quasi-polynomial time and polynomial time, respectively.
The $\sfrac{1}{\epsilon}$ factor in the latter can be removed by rescaling $\epsilon$.

\subsection{Round-or-cut with linear constraints}
\label{sec:orienteering-full}

As discussed, a main benefit of our LP-based approach is that we can easily add constraints
to the linear description and use them in a round-or-cut procedure to obtain good approximation guarantees for submodular maximization.
To this end, consider an arbitrary family of additional constraints we want to add to $Q^{(r)}$.
Let us denote these constraints by $D\vec y \leq \vec{b}$ for some matrix $D\in \Z^{m\times \binom{V}{\le r + 1}}$ and vector $\vec{b}\in \Z^m$, where $\vec y\in \R^{\binom{V}{\leq r+1}}$. We denote by $\langle D, \vec b \rangle$
the encoding size of $D$ and $\vec b$, which is upper bounded by $m\cdot |\binom{V}{\leq r+1}| \cdot (1 + \lceil \log(1 + \|D\|_{\infty}) + \log(1 + \|\vec b\|_{\infty}) \rceil)$.\footnote{We use $\log(\cdot)$ for base-$2$ logarithm and $\ln(\cdot)$ for the natural logarithm. In most places, however, this distinction is not important.}
If the constraints are rational (but non-integer), we assume they are first normalized by multiplying with the lowest common denominator. 

We need to adapt the round-or-cut procedure to deal with additional constraints.
It is typically easy to separate over additional constraints.
However, if we use a procedure akin to \Cref{alg:basic_round-or-cut}, we have to be aware that this procedure returns a path $P$, rounded from some point $\vec{x}$, only when a certain condition is satisfied (in this case $f(P) \geq \frac{T}{4(d-1)(r+1)}$).
Otherwise, we may repeat the loop.
This may introduce biases in the distribution of $P$ that are hard to control.
Furthermore, we need to specify what we compare our solution against.

\paragraph*{Optimum under linear constraints.}
Previously, $\opt$ was a layer-spanning path of maximal function value.
In the new setting, it would be natural to restrict $\opt$ to a path satisfying the linear constraints.
However, a weaker restriction will suffice for our analysis and this broadens the range of applications.
Specifically, we compare against $\E_{\opt\sim\mathcal O} [f(\opt)]$ for a distribution $\mathcal O$
over layer-spanning paths
that satisfies the linear constraints in expectation, that is,
$\E_{\opt\sim\mathcal O}[D \, \1^{(r)}_{\opt}] \le \vec b$. We assume without loss of generality
that $\mathcal O$ is the distribution
with maximal expected function value and we omit $\opt\sim\mathcal O$ in $\E[\;\cdot\;]$ if it is
clear from the context.

\paragraph*{Main statement.}
Our goal is to adapt the round-or-cut procedure to obtain the following.\begin{lemma}\label{lem:getXWithConstraints}
	Let $G=(V,A)$ be a layered graph with $V = V_1 \cup \cdots \cup V_{\LH}$, $\LH = d^r$, let $f\colon 2^V\to \RR$ be monotone submodular, let $D\vec{y} \leq \vec{b}$ be constraints for $\vec{y}\in\R^{\binom{V}{\leq r+1}}$ with $Q^{(r)}\cap \{\vec{y}\in\R^{\binom{V}{\leq r+1}} \colon D\vec{y}\leq \vec{b}\}\neq \emptyset$, and let $\varepsilon>0$.

	There is a randomized $n^{O(r)} \cdot \langle D, \vec b \rangle^{O(1)}$ time procedure that returns an $\vec{x}\in Q^{(r)}$ with $D\vec{x} \leq \vec{b}$ and, with probability $1-n^{-\frac{1}{\varepsilon}}$, the returned $\vec{x}$ satisfies
	\begin{equation*}
		\E_{P\sim \RRR(\vec x)}[f(P)] \ge \frac{1-\eps}{(d-1)(r+1)}\cdot\E_{\opt\sim\mathcal O}[f(\opt)] \mcomma
	\end{equation*}
	for any distribution $\mathcal O$ over layer-spanning paths that satisfies $\E_{\opt\sim \mathcal O}[D \, \1^{(r)}_{\opt}] \le \vec b$.
\end{lemma}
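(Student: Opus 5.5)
We plan to adapt the round-or-cut procedure of \Cref{alg:basic_round-or-cut}, with two changes. First, the separation oracle additionally checks the constraints $D\vec y\le \vec b$. Second, whenever we hold a point $\vec x\in Q^{(r)}$ with $D\vec x\le\vec b$, we never return a random path. Instead we only \emph{estimate} $\E_{P\sim\RRR(\vec x)}[f(P)]$ and either return $\vec x$ itself or produce a cut. Returning only $\vec x$ avoids biasing the output distribution, since the caller samples $\RRR(\vec x)$ afresh.

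Concretely, we work with a threshold $T$ and the truncation $f^T$. For the current $\vec x$ we call \Cref{lem:round-or-cut} with $f^T$ a number $N=\mathrm{poly}(n,1/\varepsilon)\cdot\log n$ of independent times. This gives pairs $(\overline P_j,\ell_j)$ with $\ell_j(\1^{(r)}_{P'})\ge f^T(P')$ for every layer-spanning $P'$, and $\E[\ell_j(\vec x)]\le (d-1)(r+1)\E[f^T(\overline P)]$. We form the averaged linear function $\bar\ell=\frac1N\sum_j\ell_j$ and the empirical mean $\hat F=\frac1N\sum_j f^T(\overline P_j)$. By Hoeffding bounds (all values lie in $[0,T]$ after truncation, and $\ell_j(\vec x)\le T'\cdot n^{O(r)}$ by the coefficient bound, which we handle by further clipping at $T$ or via Markov plus repetition) both quantities concentrate around their expectations up to additive error $\varepsilon T/\mathrm{poly}$. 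If $\bar\ell(\vec x)<(1-\varepsilon/2)T$, we pass the cut $\bar\ell(\vec y)\ge T$. Otherwise concentration yields $\E[f(P)]\ge\E[f^T(P)]\ge\frac{(1-\varepsilon)T}{(d-1)(r+1)}$, and we return $\vec x$. If the ellipsoid method ends with an empty region, we conclude that no $\mathcal O$ with value $\ge T$ exists.

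The validity of the cut against a \emph{distribution} $\mathcal O$ is the key new point. The region cut out contains the point $\vec z=\E_{\opt\sim\mathcal O}[\1^{(r)}_{\opt}]$. This point lies in $Q^{(r)}$ as a convex combination of feasible points, and it satisfies $D\vec z\le\vec b$ by assumption. Moreover $\bar\ell(\vec z)=\E[\bar\ell(\1^{(r)}_{\opt})]\ge\E[f^T(\opt)]$, which is $\ge T$ whenever $\E[f(\opt)]\ge T$, as long as $f^T$ is close to $f$ on the support. This last requirement is the main obstacle, because $\E[\min\{f,T\}]$ can be smaller than $\min\{\E f,T\}$. I would handle it by truncating at $T/\varepsilon$ (or at $f$'s natural scale) and checking the cut against the threshold $(1-\varepsilon)T$. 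Since $f^{T'}\ge f$ pointwise when $T'$ is large, one can even take $T'=\infty$, i.e.\ plain $f$, at the price of concentration; the coefficient bound $\ell_I\le f(V)$ from \Cref{lem:round-or-cut} lets us apply Hoeffding with range $n^{O(r)}f(V)$ and polynomially many samples when $T\ge f(V)/\mathrm{poly}$. Since $\opt$ trivially has value $\ge\max_v f(\{v\})\ge f(V)/n$, only such $T$ are relevant.

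To finish, we perform a geometric binary search over $T$ in $[f(V)/n, f(V)]$ with ratio $1+\varepsilon$, using $O(\log(n)/\varepsilon)$ runs, and return the $\vec x$ from the largest successful $T$. If no $T$ succeeds, $T=f(V)/n$ would have to fail, which cannot happen since the ellipsoid method would otherwise certify that no feasible $\vec z$ exists, contradicting the nonemptiness assumption together with $\opt$'s value. The failure probability per oracle call is made $n^{-O(r)-1/\varepsilon}$ by choosing $N$ accordingly, and a union bound over the $n^{O(r)}\langle D,\vec b\rangle^{O(1)}$ ellipsoid iterations (the cut coefficients have polynomial bit size after rounding) gives overall success probability $1-n^{-1/\varepsilon}$. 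The running time is dominated by $n^{O(r)}$ per call to \Cref{lem:round-or-cut}.
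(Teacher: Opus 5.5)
Your overall architecture matches the paper's. You work with plain $f$ rather than $f^T$, and you correctly spot that truncation breaks validity against a distribution $\mathcal O$. You certify cuts via $\vec z=\E[\1^{(r)}_{\opt}]\in Q^{(r)}$ with $D\vec z\le\vec b$. And you return the point $\vec x$ rather than a rounded path.

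\textbf{The gap: the bottom of your search range.} You claim that $\opt$ ``trivially'' has value at least $\max_v f(\{v\})\ge f(V)/n$. This is false once linear constraints and distributions are involved. In the MDP application, the only valuable vertex may be reachable only through a transition of probability $p=2^{-\mathrm{poly}}$, which \eqref{eq:lp-mdp} enforces; then every admissible $\mathcal O$ has $\E[f(\opt)]\approx p\,f(V)$. Similarly, with a length budget $C$, a vertex of huge length $w$ can be used only with probability about $C/w$. On such instances every $T\in[f(V)/n,f(V)]$ legitimately fails. Your argument that failure at $T=f(V)/n$ ``cannot happen'' rests on exactly this false bound, not on nonemptiness of $Q^{(r)}\cap\{D\vec y\le\vec b\}$. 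You are then left with nothing to return, and an arbitrary feasible $\vec x$ carries no guarantee.

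The paper handles this with \Cref{lem:min-probability}. It discards vertices that are zero in every feasible point, then takes a vertex solution maximizing $f(\{u\})$ subject to $x_u>0$. Cramer's rule gives $x_u\ge 1/c_D$ with $c_D\le 2^{n^{O(r)}\langle D,\vec b\rangle^{O(1)}}$. The search range therefore becomes $[f(V)/(nc_D),f(V)]$, with this solution as the fallback.

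\textbf{Hoeffding fails on the corrected range.} The range of $\ell(\vec x)$ is $n^{O(r)}f(V)$, while the required accuracy is $\eps T\ge\eps f(V)/(nc_D)$. You would need $\mathrm{poly}(c_D)$ samples, which is exponential in the input size. A better range bound will not save you, because $\ell(\vec x)$ really can be heavy-tailed. For example, a term $x_w f(P^{|w})$ with $x_w\approx 1$ can equal $f(V)$ with probability about $T/f(V)$.

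The missing idea is to avoid concentration entirely by a per-sample one-sided test that uses only $\ell\ge 0$, which is what the paper does:
\begin{itemize}
\item Draw $\ell$ and cut as soon as $\ell(\vec x)<(1-\eps)T$.
\item After $O(\eps^{-2}\,r\log(n+\langle D,\vec b\rangle))$ repetitions without a cut, with high probability $\Pr[\ell(\vec x)\ge(1-\eps)T]\ge 1-\eps$.
\item Nonnegativity alone then gives $\E[\ell(\vec x)]\ge(1-\eps)^2T$.
\end{itemize}
Your passing mention of ``Markov plus repetition'' points this way, but you commit to averaging instead.

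Finally, the cut coefficients must be rounded at a granularity depending on $c_D$, as in \Cref{lem:round}. This keeps the ellipsoid method's iteration count polynomial in $n^{r}$ and $\langle D,\vec b\rangle$, independent of the bit size of the values of $f$.
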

There is a subtle aspect about this lemma that is worth pointing out: 
$\E_{P\sim \RRR(\vec x)}[D\, \1^{(r)}_P] \le \vec{b}$ may not hold. This is because $\RRR(\vec x)$ is not necessarily
marginal preserving on variables $x_I$ with $|I| > 1$.
Thus, the distribution $\RRR(\vec x)$ is in this regard less constrained than $\mathcal O$ (though it is also more constrained in other regards).
In particular, it is possible that 
$\E_{P\sim \RRR(\vec x)}[f(P)] > \E_{\opt\sim \mathcal O}[f(\opt)]$ for some $\vec x$ with $D\vec x\le \vec b$.

Before proving the lemma, we first show how it implies \Cref{thm:main-orienteering}.
\begin{proof}[Proof of \Cref{thm:main-orienteering} assuming \Cref{lem:getXWithConstraints}]
Suppose we are given an instance of Submodular Orienteering. 
We reduce from a general, not necessarily layered, graph to a layered one as follows.
	Let $H \ge n+2$ to be specified later. We consider the following vertices:
	for each $i\in\{2,\dotsc,H-1\}$ and each $u,v\in V$, where $v$ is reachable from $u$ in the original graph (in particular, if $u=v$),
	we let $(u,v,i)$ be a vertex in $V_i$. Furthermore, we set $V_1 = \{(s,s,1)\}$ and $V_H = \{(t,t,H)\}$.
	There is an arc from $(u,v,i)\in V_i$ to $(u',v',i+1)\in V_{i+1}$ if $v = u'$. With each vertex $(u,v,i)$ we associate a length, which is the length of the shortest path from $u$ to $v$ in the original graph.
	The submodular function is extended to $f'(U) \coloneqq f(\{v \mid \exists i\in\{1,\dotsc,H\}, u\in V \text{ with } (u,v,i) \in U\})$, which is again submodular.
This reduction is approximation-preserving: Each walk $W$ in the original graph
	can be transformed into a layer-spanning path of the same value and at most the same length in the layered graph.
	For this, let $v_1,\dotsc,v_\ell$ for some $\ell \le n$ be the sequence of distinct vertices visited in $W$, in the order they first appear in $W$. Then $s = v_1$ and a layer-spanning path satisfying this is
	\begin{equation*}
		(v_1,v_1,1),(v_1,v_2,2),(v_2,v_3,3),\dotsc,(v_\ell,t,\ell+1),(t,t,\ell+2),\dotsc,(t,t,H) \mperiod
	\end{equation*}
	On the other hand, any layer-spanning path can be translated into an $s$-$t$ walk of the same length and at least the same value
	by replacing each $(u,v,i)$ by a shortest path from $u$ to $v$ and concatenating all paths.\footnote{
	An alternative reduction to a layered graph would be to add $(u,v,i)$ to $V_i$ only for vertices $u,v\in V$ if there is a $u$-$v$ arc in the Submodular Orienteering instance (instead of just a $u$-$v$ path). 
	In this case, the vertex $(u,v,i)$ would be assigned the length of the corresponding $u$-$v$ arc.
	One (slight) disadvantage of this approach is that $\Omega(n^2)$ layers are necessary instead of only $O(n)$, because the number of arcs in an optimal Submodular Orienteering solution can be in the order of $n^2$.
	(It is $O(n^2)$ because an optimal solution can be constructed so newly visited vertices are connected by shortest paths.)
	}

	We use $D,\vec b$ to encode the length constraint. More precisely, $D$ contains a single
	row where the coefficient of each $x_{\{(u,v,i)\}}$ is set to the shortest $u$-$v$ path length in the original graph.
	All other coefficients, i.e., those corresponding to combinations of variables, are set to zero.
	The single component of the right-hand side $\vec b$ is the budget $C$, for which we can
	assume $C\le H n\cdot \max_{a\in A} w(a)$, since the
	length of $H$ many shortest paths cannot exceed the right-hand side.
	Thus, $\langle D,\vec b\rangle \le (\langle w \rangle + n)^{O(1)}$.
	We compute $\vec x$ using \Cref{lem:getXWithConstraints} and output $P\sim\RRR(\vec x)$.
	Then the expected length of $P$ is at most $C$ because of \Cref{lem:marginal-preserving}.
	For an optimal solution $\opt$, let $\mathcal O$ be the distribution that picks $\opt$ with probability $1$.
	Then
	\begin{equation*}
		\E[f(P)] \ge (1 - n^{-1 / \eps}) \frac{1-\eps}{(d-1)(r+1)} f(\opt) \mperiod
	\end{equation*}
	By setting $d = 2$ and $r = \Theta(\log n)$, or $d = n^{\Theta(\epsilon)}$ and $r = \Theta(\sfrac{1}{\eps})$, we get the claimed trade-offs.
\end{proof}
The rest of the section is dedicated to proving \Cref{lem:getXWithConstraints}.

\paragraph*{Initial bound.}
We show how to compute a solution $\vec{x}$ with a weak initial guarantee.
This allows us later to restrict the range of the binary search for the threshold $T$, which will play in our new algorithm an analogous role to the threshold $T$ in \Cref{alg:basic_round-or-cut}.
\begin{lemma}\label{lem:min-probability}
	In time $n^{O(r)} \cdot \langle D, \vec b \rangle^{O(1)}$
	we can compute $U\subseteq V$, $\vec x\in Q^{(r)}$, and $c_D \le 2^{n^{O(r)} \langle D, \vec b \rangle^{O(1)}}$ 
	such that
	$D\vec x\le \vec b$ and
	for every $u\in V\setminus U$ we have $\Pr_{\opt\sim\mathcal O}[u\in \opt] = 0$.
	Furthermore,
	\begin{equation*}
		\E_{P\sim \RRR(\vec x)}[f(P)] \ge \frac{1}{n c_D} \cdot f(U) \mperiod
	\end{equation*}
\end{lemma}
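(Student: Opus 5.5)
Let $U$ be the set of vertices $u$ that appear with positive value in some point of the feasible region $K \coloneqq Q^{(r)}\cap\{\vec y \colon D\vec y\le \vec b\}$, meaning $\max\{y_u : \vec y\in K\}>0$. I plan to compute $U$ by solving, for each $u\in V$, the linear program $\max\{y_u : \vec y\in K\}$. This takes $n$ LPs, each with $n^{O(r)}$ variables and constraints, so the total time is $n^{O(r)}\langle D,\vec b\rangle^{O(1)}$.

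First I show that vertices outside $U$ are never used by $\mathcal O$. Suppose $\Pr_{\opt\sim\mathcal O}[u\in\opt]>0$. The point $\vec z\coloneqq\E_{\opt\sim\mathcal O}[\1^{(r)}_{\opt}]$ is a convex combination of points of $Q^{(r)}$, hence lies in $Q^{(r)}$. It satisfies $D\vec z\le\vec b$ by assumption, and $z_u>0$. So $\vec z\in K$, and therefore $u\in U$.

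Next I build $\vec x$. For each $u\in U$, let $\vec x^{(u)}$ be an optimal vertex solution of the LP for $u$. By standard bounds on vertices of rational polyhedra, $x^{(u)}_u\ge 2^{-n^{O(r)}\langle D,\vec b\rangle^{O(1)}}$, and I define $1/c_D$ to be the minimum of these values over $u\in U$. Set $\vec x\coloneqq\frac{1}{|U|}\sum_{u\in U}\vec x^{(u)}$. By convexity, $\vec x\in K$, and for every $u\in U$ we get $x_u\ge\frac{1}{|U|\,c_D}\ge\frac{1}{n c_D}$. (If $U=\emptyset$ the statement is trivial, taking any point of $K$, which exists by the assumption of \Cref{lem:getXWithConstraints}.)

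Finally, \Cref{lem:marginal-preserving} gives $\Pr_{P\sim\RRR(\vec x)}[u\in P]=x_u\ge\frac{1}{nc_D}$. I then want $\E[f(P)]\ge\frac{1}{nc_D}f(U)$. Pointwise we only have $\E[f(P)]\ge\max_{u\in U}\Pr[u\in P]\,f(\{u\})$, which by itself is too weak. Instead I use the standard fact that for a random set $P$ with $\Pr[u\in P]\ge p$ for all $u\in U$, monotone submodular $f$ satisfies $\E[f(P)]\ge \E[f(P\cap U)]\ge p\,f(U)$. One way to see this is via concavity of the Lovász extension: $\E[f(P\cap U)]=\E[\hat f(\1_{P\cap U})]$ and $\E[\1_{P\cap U}]\ge p\1_U$. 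Since $\hat f$ is concave only in the wrong direction for this, I argue directly instead. Order $U=\{u_1,\dots,u_k\}$ and telescope:
\[
f(P\cap U)\ge \sum_j \1[u_j\in P]\,f(u_j\mid \{u_1,\dots,u_{j-1}\}),
\]
which holds by submodularity because $P\cap\{u_1,\dots,u_{j-1}\}\subseteq\{u_1,\dots,u_{j-1}\}$. Taking expectations gives $\E[f(P\cap U)]\ge p\sum_j f(u_j\mid u_1,\dots,u_{j-1})=p\,f(U)$, using $f\ge0$. This is the main step to get right, and the telescoping argument settles it.
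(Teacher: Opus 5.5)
Your proof is correct, but it obtains the factor $\frac1n$ differently from the paper. The paper does not average. It takes $u^*\in U$ maximizing $f(\{u^*\})$ and a single vertex $\vec x$ of $Q^{(r)}\cap\{\vec y\colon D\vec y\le\vec b\}$ with $x_{u^*}>0$, so that $x_{u^*}\ge\frac{1}{c_D}$ by Cramer's rule. It then concludes with the elementary bound
\[
\E_{P\sim\RRR(\vec x)}[f(P)]\ \ge\ \Pr[u^*\in P]\, f(\{u^*\})\ =\ x_{u^*} f(\{u^*\})\ \ge\ \frac{1}{c_D}\cdot\frac{f(U)}{n}\mcomma
\]
where the $\frac1n$ comes from subadditivity, $f(U)\le\sum_{u\in U}f(\{u\})\le n\, f(\{u^*\})$. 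The single-element bound you dismiss as too weak is only too weak for your averaged point, where every marginal has already been divided by $|U|$. Choosing the point adapted to the best singleton makes it suffice.

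Your route instead pays the $\frac1n$ in the averaging step, and then needs the standard inequality $\E[f(P)]\ge p\,f(U)$ for random sets whose marginals on $U$ are at least $p$. It is somewhat longer. In exchange it yields a feasible point in which every vertex of $U$ has marginal at least $\frac{1}{nc_D}$, a stronger property that is not needed here.

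Two small remarks on your final step. First, as displayed, your telescoping sum drops $f(\emptyset)$, so $\sum_j f(u_j\mid\{u_1,\dots,u_{j-1}\})=f(U)-f(\emptyset)$ rather than $f(U)$. Keeping that term gives $\E[f(P\cap U)]\ge f(\emptyset)+p\,(f(U)-f(\emptyset))\ge p\,f(U)$, using $p\le 1$ and $f(\emptyset)\ge 0$. Second, the Lov\'asz extension of a submodular function is convex, not concave. Jensen's inequality together with monotonicity of the extension therefore gives $\E[f(P\cap U)]\ge\hat f(p\,\1_U)=p\,f(U)+(1-p)f(\emptyset)$ directly, so the route you abandoned was in fact valid.
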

\begin{proof}
	Let $u \in U$ if and only if there 
	exists a solution $\vec x\in Q^{(r)}$
	with $D\vec x\le \vec b$ and $x_{u} > 0$.
	The set $U$ can be computed by solving a linear program for each vertex in the required time.
	Note that every $v\in V$ with $\Pr_{\opt\sim\mathcal O}[v\in \opt] > 0$ must be in $U$, since one can take $\vec x = \E_{\opt\sim\mathcal O}[\1^{(r)}_\opt]$.
	Let $\vec x$ be a vertex solution with the properties as above for $u\in U$ maximizing $f(\{u\})$, which can
	be computed the same way.

	By Cramer's rule, the denominator of $x_{u}$ is the determinant of a submatrix of the constraint matrix of $Q^{(r)} \cap \{\vec x : D\vec x \le \vec b\}$, which is bounded by some $c_D\le 2^{n^{O(r)} \langle D, \vec b \rangle^{O(1)}}$.
	It follows that
	\begin{equation*}
		\E_{P\sim \RRR(\vec x)}[f(P)] \ge x_{u} f(\{u\}) \ge \frac{1}{n c_D} \cdot f(U) \mperiod
	\end{equation*}
	The first inequality follows from $\Pr[u\in P] = x_u$ (by \Cref{lem:marginal-preserving}) and monotonicity of $f$.
\end{proof}

\paragraph*{Avoiding bit complexity dependence.}
The next lemma will be used to round the linear functions obtained from \Cref{lem:round-or-cut} in order to
avoid cuts with a high bit complexity in the ellipsoid method
and to help us avoid unnecessary dependence of the running time
on the bit complexity of~$f$.
\begin{lemma}\label{lem:round}
	Let $c_D$ be as in \Cref{lem:min-probability}.
	Let $\ell(\vec y) = \sum_{I\in\binom{V}{\le r+1}} \ell_I y_I$ be a linear function over $\vec y\in \R^{\binom{V}{\le r+1}}$ with $\ell_I \in [0,f(V)]$ for $I\in \binom{V}{\leq r+1}$.
	Let $T\in\left[\frac{f(V)}{n c_D}, f(V)\right]$.

	We can compute in time $n^{O(r)}$ another linear function $\overline\ell(\vec y) = \sum_{I\in\binom{V}{\le r+1}} \overline\ell_I y_I$ and some $\overline T\in \ZZ$ such that all coefficients $\overline\ell_I$ and also $\overline T$ are non-negative integers bounded by $2 c_D^2 n^{r+3}/\eps + 1$ satisfying
	\begin{itemize}[nosep]
		\item $\overline\ell(\vec y) < \overline T \Rightarrow \ell(\vec y) < T$ for all $\vec y \in [0,1]^{\binom{V}{\le r+1}}$, and
		\item $\ell(\vec y) < (1 - \eps) T \Rightarrow \overline\ell(\vec y) < \overline T$ for all $\vec y \in [0,1]^{\binom{V}{\le r+1}}$.
	\end{itemize}
\end{lemma}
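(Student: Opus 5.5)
We rescale and round down the coefficients of $\ell$ to a grid whose resolution is a small fraction of $T$. Set $N = |\binom{V}{\le r+1}| \le n^{r+1}+1$ (at most $n^{r+2}$ for $n\ge 2$) and a granularity
\begin{equation*}
  \delta \coloneqq \frac{\eps T}{N} \mcomma
\end{equation*}
and define $\overline\ell_I \coloneqq \lfloor \ell_I/\delta \rfloor$ and $\overline T \coloneqq \lceil (1-\eps)T/\delta \rceil$. Clearly this takes $n^{O(r)}$ time, and all quantities are non-negative integers. For the size bound we use $T \ge f(V)/(n c_D)$ and $\ell_I \le f(V)$, which give
\begin{equation*}
  \overline\ell_I \le \frac{f(V)}{\delta} = \frac{N f(V)}{\eps T} \le \frac{N n c_D}{\eps} \le \frac{c_D n^{r+3}}{\eps} \mperiod
\end{equation*}
Similarly, $\overline T \le T/\delta + 1 = N/\eps + 1$. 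Both are comfortably within $2c_D^2 n^{r+3}/\eps + 1$, since $c_D \ge 1$.

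For the first implication, we use $\delta\overline\ell_I \ge \ell_I - \delta$ and $\vec y\in[0,1]^{\binom{V}{\le r+1}}$. These give
\begin{equation*}
  \ell(\vec y) \le \delta\,\overline\ell(\vec y) + \delta\sum_I y_I \le \delta\,\overline\ell(\vec y) + \delta N \mperiod
\end{equation*}
If $\overline\ell(\vec y) < \overline T$, then by integrality $\overline\ell(\vec y) \le \overline T - 1$ whenever $\vec y$ is integral. For fractional $\vec y$, however, we only have $\overline\ell(\vec y) < \overline T$. We therefore get
\begin{equation*}
  \ell(\vec y) < \delta \overline T + \eps T \le (1-\eps)T + \delta + \eps T = T + \delta \mcomma
\end{equation*}
which is slightly too weak. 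To repair this, we define the threshold with a floor, $\overline T \coloneqq \lfloor (1-\eps)T/\delta\rfloor$, and shrink the slack by using granularity $\delta = \eps T/(2N)$. This yields $\ell(\vec y) < \delta\overline T + \eps T/2 \le (1-\eps/2)T < T$.

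For the second implication, we use $\delta\overline\ell_I \le \ell_I$, so $\delta\,\overline\ell(\vec y) \le \ell(\vec y)$. If $\ell(\vec y) < (1-\eps)T$, we need $\overline\ell(\vec y) < \overline T$, which requires $\delta\overline T \ge (1-\eps)T$. The floor threshold breaks this by up to $\delta$. The two implications thus pull the threshold in opposite directions, and the main obstacle is choosing $\overline T$ so that both hold simultaneously.

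We resolve this by separating the two targets: we round $\overline T$ at level $(1-\eps/2)T$, say $\overline T \coloneqq \lceil (1-\eps/2)T/\delta\rceil$, with $\delta = \eps T/(4N)$. Then the second implication holds because
\begin{equation*}
  \delta\,\overline\ell(\vec y) \le \ell(\vec y) < (1-\eps)T < (1-\eps/2)T \le \delta\overline T \mperiod
\end{equation*}
The first implication holds because
\begin{equation*}
  \ell(\vec y) < \delta\overline T + \delta N \le (1-\eps/2)T + \delta + \eps T/4 < T \mcomma
\end{equation*}
using $\delta \le \eps T/4$. The bounds from the first paragraph change only by constant factors, which are absorbed by the factor $2c_D$ in the claimed bound $2c_D^2 n^{r+3}/\eps+1$ (again using $c_D \ge 1$ and $N \le n^{r+2}$). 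Finally, the computation requires only arithmetic on $n^{O(r)}$ numbers. Here $T$ and $f(V)$ are machine words, and the divisions and floors are single word operations, so the running time is $n^{O(r)}$.
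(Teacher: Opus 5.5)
Your final construction ($\delta=\eps T/(4N)$, $\overline\ell_I=\lfloor \ell_I/\delta\rfloor$, $\overline T=\lceil(1-\eps/2)T/\delta\rceil$) is correct. It is essentially the paper's scale-and-round argument in mirror image: the paper rounds the scaled coefficients \emph{up} and puts the threshold exactly at the scaled $T$, so the first implication is lossless and the additive rounding error $N$ is absorbed in the second via $\overline T\ge N/\eps$, whereas you round \emph{down} and split the $\eps T$ gap by thresholding at $(1-\eps/2)T$.

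One small bookkeeping slip: with $c_D=1$ the extra factor $4$ is not absorbed by $2c_D$ as you claim. It is instead absorbed by the spare factor of $n$ once you use $N\le n^{r+1}+1$ rather than $N\le n^{r+2}$, which gives $\overline\ell_I\le 4c_D n(n^{r+1}+1)/\eps\le 2c_D^2n^{r+3}/\eps$ for $n\ge 3$.
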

\begin{proof}
	We set
	\begin{equation*}
		\overline T \coloneqq \left\lceil\; \frac{c_D n T}{\eps f(V)} \cdot \left|\binom{V}{\le r+1}\right| \;\right\rceil \in \left[\; \frac{1}{\eps} \cdot \left|\binom{V}{\le r+1}\right| , \frac{2 c_D n}{\eps} \cdot \left|\binom{V}{\le r+1}\right| \;\right] \mperiod
	\end{equation*}
	Furthermore, for every $I\in \binom{V}{\le r+1}$ we set
	\begin{equation*}
		\overline\ell_I \coloneqq \left\lceil\ell_I \cdot \frac{\overline{T}}{T} \right\rceil \le f(V) \cdot \frac{\overline{T}}{T} + 1 \le \overline T c_D n + 1 \mperiod
	\end{equation*}
	We now verify the two claimed properties of $\overline \ell$ and $\overline T$.
	Let $\vec y\in [0,1]^{\binom{V}{\le r+1}}$. If $\overline \ell(\vec y) < \overline T$ then
	\begin{equation*}
		\ell(\vec y) \le \overline \ell(\vec y) \cdot \frac{T}{\overline{T}}< T \mperiod
	\end{equation*}
	If, on the other hand, $\ell(\vec y) < (1-\eps)T$, then
	\begin{equation*}
		\overline\ell(\vec y) \le \ell(\vec y) \cdot \frac{\overline{T}}{T} + \left|\binom{V}{\le r+1}\right|
		< (1-\eps)\overline{T} + \eps\overline{T} = \overline T \mperiod \qedhere
	\end{equation*}
\end{proof}
\medskip

\noindent
\Cref{lem:getXWithConstraints} can be obtained by a modification of \Cref{alg:basic_round-or-cut}, which is described in \Cref{alg:constrained_round-or-cut}. We combine this algorithm with a few additional steps. 
Then we invoke \Cref{lem:min-probability}.
If $U \subsetneq V$, we can
remove all vertices in $V\setminus U$ from the instance, since they are not relevant for $\opt$, and restart
the algorithm on the smaller instance.
Thus, assume without loss of generality $V = U$.
We perform a binary search over $T\in \left[ \frac{f(V)}{n c_D}, f(V) \right]$.
We stop the search when the ratio between upper and lower bound is at most $\frac{1}{1 - \eps}$. 
If \Cref{alg:constrained_round-or-cut} fails in every invocation, we return the solution from \Cref{lem:min-probability}.

In \Cref{alg:constrained_round-or-cut}, $\kappa$ is chosen to be a sufficiently large constant so that $(1-\varepsilon)^{\kappa \cdot r\ln(n+\langle D, \vec b \rangle)} \leq \langle D, \vec b \rangle^{-c} n^{-rc - \frac{1}{\varepsilon}}$, where $n^{cr} \cdot \langle D, \vec b \rangle^c$ is an upper bound for the number of iterations of the binary search times the number of iterations needed by the ellipsoid method to finish when run with constraints of $Q^{(r)}$ and $D\vec{x} \leq \vec{b}$, and with further constraints with coefficients bounded by $2c^2_D n^{r+3}/\eps + 1$,
the term from \Cref{lem:round}.
This is for example achieved by $\kappa \coloneqq \frac{1}{\varepsilon} \cdot (2c + \frac{1}{\varepsilon})$.

\begin{algorithm2e}
\SetKwFor{Whenever}{whenever}{do}{end}
\SetKwFor{Loop}{loop}{}{end}
\SetKwFor{RepeatTimes}{repeat}{times}{end}
\DontPrintSemicolon
\caption{Round-or-cut for constrained Submodular Orienteering}\label{alg:constrained_round-or-cut}
\KwIn{Layered orienteering graph $G=(V,A)$ with $V=V_1 \cup \cdots \cup V_{\LH}$, $\LH = d^r$, monotone submodular function $f\colon 2^V \to \mathbb{R}_{\geq 0}$, $T \in \mathbb{R}_{\geq 0}$, and constraints $D\vec{y}\leq \vec{b}$
for $\vec{y}\in[0,1]^{\binom{V}{\leq r+1}}$ with $Q^{(r)}\cap \{\vec{y}\in\R^{\binom{V}{\leq r+1}} \colon D\vec{y}\leq \vec{b}\}\neq \emptyset$.}
	\KwOut{$\vec{x}\in Q^{(r)}$ with $D\vec x \le \vec b$ such that
	$\E_{P\sim\RRR(\vec x)}[f(P)] \ge \frac{(1-\eps)^2}{(d-1)(r+1)} \cdot T$
	or determine that $\E[f(\opt)]<T$.}

Initialize and start ellipsoid method to find point in a polytope in $[0,1]^{\binom{V}{\leq r+1}}$ given by a separation oracle.\;

\Whenever{\label{algline:sep_call_constrained}Ellipsoid calls separation oracle with point $\vec x\in [0,1]^{\binom{V}{\leq r+1}}$}{
\uIf{$\vec x \not \in Q^{(r)}$ or $D\vec{x} \not\leq \vec{b}$}{pass a violated constraint of $Q^{(r)}$ or $D\vec{x}\leq \vec{b}$ to ellipsoid and continue on line~\ref{algline:sep_call_constrained}.}\label{algline:checkQh_constrained}
\Else{
	\RepeatTimes{\label{algline:repeat_kappa_logn}$\kappa \cdot r \lceil \ln(n + \langle D, \vec b \rangle)\rceil$}{
		Use \Cref{lem:round-or-cut} with $f$ to get random layer-spanning path $P$ and random linear function $\ell$.\;
		Use \Cref{lem:round} with $\ell$ and $T$ to get $\overline\ell$, $\overline T$.\;
		\If{$\overline\ell(\vec x) < \overline T$}{pass cutting plane $\overline\ell(\vec x) \geq \overline T$ to ellipsoid and continue with ellipsoid on line~\ref{algline:sep_call_constrained}.}\label{algline:ellCutConstrained}
	}
	\Return{$\vec{x}$}
}
}

	\If{ellipsoid terminates because no vector satisfies generated cutting planes}{\Return{``$\E[f(\opt)] < T$''}.}

\end{algorithm2e}

We now show that the procedure indeed fulfills the properties stated in \Cref{lem:getXWithConstraints}.

\begin{proof}[Proof of \Cref{lem:getXWithConstraints}]
	We first discuss correctness of \Cref{alg:constrained_round-or-cut}.
	If the algorithm outputs that ``$\E[f(\opt)] < T$'', then this is correct due to the following.
Let $\vec x = \E[\1^{(r)}_{\opt}]$. Then $\vec x \in Q^{(r)}$ and $D \vec x \le \vec b$.
	Since the generated cutting planes define an infeasible set of constraints,
	there must be a generated cut with $\overline\ell(\vec x) < \overline T$.
	This cut was derived from some $\ell$ and $T$ using \Cref{lem:round}. By the
	lemma, it follows that $\ell(\vec x) < T$.
	Since $\ell$ was constructed by \Cref{lem:round-or-cut},
	for any layer-spanning path $P$
	we have $f(P) \le \ell(\1^{(r)}_P)$. Thus,
	$\E[f(\opt)] \le \E[\ell(\1^{(r)}_{\opt})] = \ell(\vec x) < T$
	and the conclusion is correct.

	If the algorithm does not return ``$\E[f(\opt)] < T$'', then it must return a point $\vec{x}\in Q^{(r)}$ with $D\vec{x} \leq \vec{b}$. Indeed, the ellipsoid method runs in polynomial time and therefore cannot stay forever inside the loop starting at line~\ref{algline:sep_call_constrained}.
Consider now any moment in the execution of \Cref{alg:constrained_round-or-cut} when it is at line~\ref{algline:repeat_kappa_logn}.
Let $\vec{x}\in Q^{(r)}$ be the point that is currently passed to the repeat loop, and let $P$ and $\ell$ be the random path and random linear function obtained from \Cref{lem:round-or-cut} with respect to $\vec x$.
	If $\Pr[\ell(\vec{x})<(1-\eps) T] \geq \varepsilon$, then since $\ell(\vec{x})<(1 - \eps)T$ implies $\overline\ell(\vec{x})< \overline T$, the algorithm will generate in each repetition of the repeat loop a cutting plane with probability at least $\varepsilon$.
	Hence, the probability that no cutting plane is generated in $\kappa \cdot r \lceil\ln(n+\langle D, \vec b \rangle)\rceil$ repetitions is at most
\begin{equation*}
	\left(1-\varepsilon\right)^{\kappa \cdot r \ln(n+\langle D, \vec b \rangle)} \leq \langle D, \vec b \rangle^{-c} n^{-rc - \frac{1}{\varepsilon}} \mcomma
\end{equation*}
where the inequality follows by our choice of $\kappa$.
	Because the number of ellipsoid iterations over all iterations of the binary search is at most $\langle D, \vec b \rangle^{c} n^{rc}$, a union bound implies that with probability at most $n^{-\frac{1}{\varepsilon}}$, there is at least one moment during the whole execution of \Cref{alg:constrained_round-or-cut} when $\Pr[\ell(\vec{x})<(1 - \eps) T] \geq \varepsilon$ but no cutting plane is generated in the repeat loop.
	We finish the proof by showing that only in such a case, the algorithm may return a point $\vec{x}$ with $\E[f(P)] < \frac{T}{(d-1)(r+1)}\cdot(1-\varepsilon)^2$.
Indeed, if this case did not happen, and a vector $\vec{x}$ is returned, then 
	$\Pr[\ell(\vec{x})\geq (1 - \eps)T] \geq 1-\varepsilon$.
Thus, $\E[\ell(\vec x)]\ge (1 - \eps)^2 T$, which implies
	\begin{equation*}
		\E_{P\sim\RRR(\vec x)}[f(P)] \ge \frac{\E[\ell(\vec{x})]}{(d-1)(r+1)} \ge \frac{(1 - \eps)^2 T}{(d-1)(r+1)} \mcomma
	\end{equation*}
where the first inequality follows from \Cref{lem:round-or-cut}.

	We lose another factor of $(1-\eps)$ from terminating the binary search with a small remaining gap.
	This proves a guarantee with $(1 - \eps)^3$ instead of $1-\eps$, but we can also
	obtain the latter by rescaling $\eps$ with a constant.

By \Cref{lem:min-probability}, the optimum cannot be above the range of the binary search because none of the elements deleted at the beginning is part of any realization of $\opt$.
We recall that the binary search range is $\left[\sfrac{f(V)}{n c_D}, f(V)\right]$ and that we replaced $V$ by the set $U$ from \Cref{lem:min-probability} at the beginning of the algorithm.
If $\E_{\opt\sim \mathcal{O}}[f(\opt)]$ is below the range, then the guarantee of the solution from \Cref{lem:min-probability} suffices.

The running time of \Cref{alg:constrained_round-or-cut} follows from the fact that each single operation takes at most $n^{O(r)}$ time and the ellipsoid method terminates after at most polynomially many iterations.
\end{proof}

 \section{Submodular Markov Decision Process}\label{sec:mdp}
To simplify the transfer of our results, we deviate from the typical formalization of MDPs with states and actions as in the introduction.
The following model is equivalent but closer to Submodular Orienteering.
As before we consider a directed graph $G = (V, A)$ on layers $V = V_1\cup V_2\cup \cdots \cup V_H$.

Each vertex is either a decision vertex or a chance vertex.
We write $D$ for the decision vertices and $C$ for the chance vertices.
Thus, $V = C\cup D$.
There is one distinct chance vertex $s$ with $V_1 = \{s\}$.
There are transition probabilities $\vec p = (p_c(v))_{c\in C, v\in V}$ that describe the probability that the successor of some $c\in C$ is
$v\in V$.
We assume that $p_c(v)$ is non-zero if and only if there is an arc $(c, v)$ in the graph.
A (randomized) policy $\pi$ is a function that assigns to any \emph{trajectory} $(v_1,v_2,\dotsc,v_\ell)$, which is a walk in $G$ starting in $v_1 = s$ and ending in some decision vertex $v_\ell\in D$, a probability distribution over vertices.
We write $\pi_{v_{\le \ell}}(v)$ for the probability that $v$ is chosen as the next vertex based on the trajectory $v_{\le \ell} = (v_1,v_2,\dotsc,v_{\ell})$.
We require that $\pi_{v_{\le \ell}}(v) = 0$ unless there is an arc $(v_{\ell}, v)$ in the graph.
We are mainly interested in implicit policies that can be computed fast, more precisely in polynomial or quasi-polynomial time, on a given trajectory.

We consider here a monotone submodular reward function $f \colon 2^{V} \to \RR$.
We associate with the MDP and a policy the expectation of the reward $f(\{v_1,v_2,\dotsc,v_{\LH}\})$
over the vertices that are visited in the stochastic process where $v_1 = s$,
and $v_i$ for $i > 1$, is chosen either based on $p_{v_{i-1}}$, if $v_{i-1}$ is a chance vertex, or based on $\pi$, otherwise. 

We assume without loss of generality that $\LH = d^r$ for some $d\in\Z_{\ge 2}$ and $r\in \Z_{\ge 1}$.
The equivalence to the model in the introduction can be seen as follows. For each time and state, $s_t$,
we have a decision vertex $v \in V_{2t}$. For each time, state, and action,
$s_t$ and $a_t$, we have a chance vertex $u\in V_{2t + 1}$.
Furthermore, there is a chance vertex $s$ with $\{s\} = V_1$, which is distinct from all other vertices.
The transition probability $p_s(v)$ is the probability of state $v$ to be the initial state and
the transition probabilities $p_{u}(v)$, $u\in C\setminus\{s\}$, come from the transition probabilities
of a time, state and action.
If the number of layers is not a power of $d$, we add dummy layers. 
This transformation increases $H$ only by a constant factor, which is insignificant for our results. 

\paragraph{The LP formulation.}
To obtain an LP formulation that captures policies of MDPs, we use $Q^{(r)}$ with the following additional constraints, which intuitively enforce that on chance vertices, the successor vertex is chosen according to the given transition distribution:
\begin{equation}\label{eq:lp-mdp}
	x_{I\cup \{u\}} = x_{I} \cdot p_c(u) \qquad \begin{aligned}
		&\forall i\in\{1,2,\dotsc,\LH-1\}, c\in V_i\cap C, u\in V_{i+1} \mcomma \\
		&\{c\}\subseteq I\subseteq V_1\cup\dotsc\cup V_{i}, |I|\le r \mperiod
	\end{aligned}
\end{equation}
Here, we think of $x_I$ and $x_{I\cup\{u\}}$ as the probabilities that all vertices in $I$ (or in $I\cup\{u\}$)
are visited.

In the statement below, as well as later, when talking about ``the optimal policy'', we mean any fixed optimal policy in case there are multiple optimal policies.

\begin{lemma}\label{lem:mdp-opt}
	Let $\opt \sim \mathcal O$ where $\mathcal O$ is the distribution over layer-spanning paths arising from the MDP process with the optimal policy, that is, applying the optimal policy
	on decision vertices and the transition distributions $\vec p$ for chance vertices.
	Then $\E[\vec 1^{(r)}_{\opt}]$ satisfies~\eqref{eq:lp-mdp}.
\end{lemma}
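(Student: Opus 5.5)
We need to show that for every $i$, every chance vertex $c\in V_i\cap C$, every $u\in V_{i+1}$, and every $I$ with $\{c\}\subseteq I\subseteq V_1\cup\dots\cup V_i$ and $|I|\le r$,
\[
\Pr[I\cup\{u\}\subseteq\opt]=\Pr[I\subseteq\opt]\cdot p_c(u),
\]
since $\E[\1^{(r)}_{\opt}]_J=\Pr[J\subseteq\opt]$ for every $J$. The plan is to express both sides as a conditional probability and use the Markov property of the transitions at chance vertices.

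If $\Pr[I\subseteq\opt]=0$, both sides vanish, because $I\cup\{u\}\subseteq\opt$ implies $I\subseteq\opt$. Otherwise it suffices to show
\[
\Pr[v_{i+1}=u\mid I\subseteq\opt]=p_c(u).
\]
Here $v_1,\dots,v_\LH$ is the trajectory and $\opt=\{v_1,\dots,v_\LH\}$.

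Since the graph is layered, each layer meets $\opt$ in exactly one vertex. So the event $I\subseteq\opt$ is the event that $v_j=w$ for each $w\in I\cap V_j$. Because $I\subseteq V_1\cup\dots\cup V_i$, this event depends only on the prefix $(v_1,\dots,v_i)$, and it forces $v_i=c$. Decompose it into a disjoint union over all prefixes $\tau=(v_1,\dots,v_i)$ with positive probability that satisfy the constraints; each such prefix ends in $c$. For each such $\tau$, the process chooses $v_{i+1}$ from $p_c$, independently of the earlier history and of the policy's internal randomness. This gives
\[
\Pr[v_{i+1}=u\mid v_{\le i}=\tau]=p_{c}(u).
\]
Summing over $\tau$ with weights $\Pr[v_{\le i}=\tau]$ yields $\Pr[I\cup\{u\}\subseteq\opt]=p_c(u)\Pr[I\subseteq\opt]$.

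The only delicate point is to make sure the conditioning is on an event determined by the past. This is exactly where the requirement $I\subseteq V_1\cup\dots\cup V_i$ is used: if $I$ contained a later vertex, conditioning on it would bias the choice of $v_{i+1}$. We also use the layered structure, in which the walk visits each layer exactly once, so that the vertex of $\opt$ in $V_{i+1}$ is the successor of $c$.
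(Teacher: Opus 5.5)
Your proof is correct and follows the paper's argument: the same case split on $\Pr[I\subseteq\opt]=0$, followed by the identity $\Pr[I\cup\{u\}\subseteq\opt]=\Pr[I\subseteq\opt]\cdot p_c(u)$. The paper simply asserts this identity; you justify it by decomposing over past prefixes ending in $c$ and using that the successor of a chance vertex is drawn from $p_c$ independently of the history.
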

\begin{proof}
	Let $c\in V_i\cap C$, $u\in V_{i+1}$, and $\{c\} \subseteq I\subseteq V_1\cup\cdots\cup V_{i}$, $|I|\le r$.
	If $\Pr[I\subseteq \opt] = 0$ then also $\Pr[I\cup\{u\}\subseteq \opt] = 0$.
	Thus,
	\begin{equation*}
		\E[(\vec 1^{(r)}_\opt)_{I\cup \{u\}}] = 0 = \E[(\vec 1^{(r)}_\opt)_{I}]
	\end{equation*}
	and both sides in~\eqref{eq:lp-mdp} are zero.
	Now assume that $\Pr[I\subseteq \opt] > 0$.
	Then
	\begin{equation*}
		\E[(\vec 1^{(r)}_\opt)_{I\cup \{u\}}] = \Pr[I\cup\{u\}\subseteq \opt]
		= \Pr[I\subseteq \opt] \cdot p_c(u)
		= \E[(\vec 1^{(r)}_\opt)_{I}] \cdot p_c(u) \mperiod \qedhere
	\end{equation*}
\end{proof}

\paragraph{The policy.}
We will prove that the policy in \Cref{alg:mdp-policy} (together with $p$) produces a distribution identical to
$\RRR(\vec x)$ and from this we derive the claimed approximation guarantee.
For convenience, we define the policy in \Cref{alg:mdp-policy} for any layer $\ell$ and any trajectory $(v_1,\dotsc,v_\ell)$, even if $v_\ell$ is a chance node.
In this case, the policy is not actually used, but it is still well-defined.
Note that by~\eqref{eq:lp-mdp}, \Cref{alg:mdp-policy} actually simulates the environment's probability distribution on chance nodes.
This is convenient later for the analysis.
\begin{algorithm2e}
\DontPrintSemicolon
\caption{Submodular MDP Policy}\label{alg:mdp-policy}
\KwIn{Layered MDP graph $G=(V,A)$ with $V=V_1 \cup \cdots \cup V_{\LH}$, $\LH=d^r$, $\vec x\in Q^{(r)}$ satisfy \eqref{eq:lp-mdp}, and trajectory $(v_1,v_2,\dots, v_\ell) \in V_1 \times V_2 \times \dots \times V_\ell$.}
	\KwOut{Distribution over vertices in $V_{\ell+1}$ that are neighbors of $v_\ell$.}

	Let $K = \{ \max\{k \cdot d^j \mid k\in\ZZ, k \cdot d^j \le \ell \} \mid j\in\ZZ\} \setminus\{0\}$\;
	Let $I = \{ v_i : i\in K \}$\;
	\If{$x_I = 0$}{
		\Return ``error''
	}\Else{
		\Return $\left(\frac{x_{I\cup\{u\}}}{x_I}\right)_{u\in V_{\ell+1}}$
	}
\end{algorithm2e}

We first argue that if we iteratively sample a path using only \Cref{alg:mdp-policy}, then we never encounter an error.
Because \Cref{alg:mdp-policy} defines a policy that simulates on chance nodes the environment's probability distribution, the lemma below also applies to running the policy of \Cref{alg:mdp-policy} only on decision nodes and sampling successors of chance nodes according to $\vec p$.
\begin{lemma}
	Let $\vec x\in Q^{(r)}$ satisfy~\eqref{eq:lp-mdp}.
	If \Cref{alg:mdp-policy} is called with some trajectory $v_1,\dotsc,v_{\ell}$ ending in a chance vertex $v_{\ell}\in C$ and with $\vec x$, then either ``error'' is returned or $v_{\ell+1}$ is chosen according to $p_{v_{\ell}}$.
\end{lemma}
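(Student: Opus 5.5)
The plan is to show that, when $x_I\neq 0$, the set $I$ built by \Cref{alg:mdp-policy} contains $v_\ell$. Then constraint~\eqref{eq:lp-mdp}, applied with $c=v_\ell$, directly gives $\frac{x_{I\cup\{u\}}}{x_I}=p_{v_\ell}(u)$ for every $u\in V_{\ell+1}$.

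First, I check that $\ell\in K$. Taking $j=0$ in the definition of $K$, we have $\max\{k\cdot d^0 : k\in\ZZ,\ k\le \ell\}=\ell$, and $\ell\ge 1$, so $\ell\in K$. Hence $v_\ell\in I$, that is, $\{c\}\subseteq I$ with $c=v_\ell\in V_\ell\cap C$.

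Next, I verify the remaining side conditions of~\eqref{eq:lp-mdp}. Every index in $K$ is at most $\ell$, so $I\subseteq V_1\cup\dots\cup V_\ell$. For the size bound $|I|\le r$, note that for $j\ge r$ the value $\max\{k d^j\le \ell\}$ is $0$, since $\ell\le \LH-1<d^r$. These indices are removed from $K$. Only $j\in\{0,\dots,r-1\}$ remain, giving $|K|\le r$ and hence $|I|\le r$. The case $\ell=\LH$ is irrelevant because there is no layer $V_{\LH+1}$; we need $\ell\le \LH-1$ anyway for~\eqref{eq:lp-mdp} to apply.

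With these conditions in place, suppose ``error'' is not returned, so $x_I>0$. By~\eqref{eq:lp-mdp}, $x_{I\cup\{u\}}=x_I\,p_{v_\ell}(u)$ for each $u\in V_{\ell+1}$, so the returned distribution is exactly $(p_{v_\ell}(u))_{u\in V_{\ell+1}}$. These values sum to one, since the successors of $c$ all lie in $V_{\ell+1}$ and $p_c$ is a distribution.

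The only delicate step is the cardinality bound $|I|\le r$, which requires the bound $\ell<d^r$ explained above. Everything else is bookkeeping.
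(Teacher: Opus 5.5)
Your proof is correct and matches the paper's argument, which applies \eqref{eq:lp-mdp} with $c=v_\ell$ to get $x_{I\cup\{u\}}/x_I=p_{v_\ell}(u)$ whenever $x_I>0$. The paper leaves the side conditions of \eqref{eq:lp-mdp} implicit; your explicit checks are all valid, namely that $\ell\in K$ (take $j=0$), that $I\subseteq V_1\cup\dots\cup V_\ell$, and that $|I|\le r$ (since $\ell\le H-1<d^r$).
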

\begin{proof}
	Assume that no error is returned and therefore $x_I > 0$ where $I$ is as in \Cref{alg:mdp-policy}.
	By~\eqref{eq:lp-mdp} we have for each $u\in V_{\ell+1}$ that
	\begin{equation*}
		x_{I\cup \{u\}} = x_I \cdot p_{v_{\ell}}(u) \mperiod
	\end{equation*}
	Thus, $u$ is chosen with probability $x_{I\cup\{u\}} / x_I = p_{v_{\ell}}(u)$.
\end{proof}
\begin{lemma}
	Let $\vec x\in Q^{(r)}$ satisfy~\eqref{eq:lp-mdp}.
	Suppose we obtain $v_1,\dotsc,v_H$ by repeatedly applying \Cref{alg:mdp-policy} on the previous vertices and $\vec x$. Then
	the policy never outputs ``error''.

\end{lemma}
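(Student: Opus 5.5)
We argue by induction on $\ell$ that whenever the trajectory $v_1,\dots,v_\ell$ has been produced by repeatedly applying \Cref{alg:mdp-policy}, the set $I_\ell=\{v_i : i\in K_\ell\}$ used at step $\ell$ satisfies $x_{I_\ell}>0$. Here $K_\ell$ denotes the set $K$ computed for trajectory length $\ell$.

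The base case is $\ell=1$. Then $K_1=\{1\}$ and $v_1=s$ is the unique vertex of $V_1$. By \eqref{lp:layer} with $I=\emptyset$ we get $x_s=x_\emptyset=1>0$.

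For the inductive step, first compare $K_\ell$ and $K_{\ell+1}$. For each $j$, the element $m_j(\ell)=\max\{k d^j\le \ell\}$ either stays the same when passing to $\ell+1$, or it becomes $\ell+1$; the latter happens exactly when $d^j\mid \ell+1$. Hence $K_{\ell+1}\subseteq K_\ell\cup\{\ell+1\}$, and so $I_{\ell+1}\subseteq I_\ell\cup\{v_{\ell+1}\}$. Dropped indices only make $I$ smaller, which matters because by monotonicity we have $x_{J}\ge x_{J'}$ whenever $J\subseteq J'$. This follows by repeatedly applying the constraint $x_{J\cup\{v\}}\le x_J$ of $Q^{(r)}$. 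We must ensure all indices lie within range, i.e. $|I_\ell\cup\{v_{\ell+1}\}|\le r+1$. This holds because $|K_\ell|\le r+1$ (for $j\ge r+1$ we have $m_j=0$ when $\ell<H=d^r$, and for $j=r$ we get $m_r=0$ unless $\ell=H$). The count should be checked carefully, since $K_\ell$ has at most one element per $j\in\{0,\dots,r\}$. If $|I_\ell|=r+1$ we would need $x_{I_\ell\cup\{u\}}$ with $r+2$ elements. So I expect the main technical nuisance is showing $|I_\ell|\le r$ for $\ell<H$. For $\ell<H$ we have $m_r(\ell)=0$, so only $j=0,\dots,r-1$ contribute, giving $|I_\ell|\le r$. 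Good.

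Next, $v_{\ell+1}$ is sampled with probability $x_{I_\ell\cup\{v_{\ell+1}\}}/x_{I_\ell}$, so any sampled $u$ has $x_{I_\ell\cup\{u\}}>0$. This distribution is well defined, summing to one by \eqref{lp:layer} applied to $I_\ell$ and layer $\ell+1$. Combining with the containment and monotonicity gives $x_{I_{\ell+1}}\ge x_{I_\ell\cup\{v_{\ell+1}\}}>0$, which completes the induction.

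The one subtle point is using each monotonicity step only at levels where the constraint exists, namely $|J|\le r$. This is fine because all sets involved have size at most $r+1$. The rest is bookkeeping on base-$d$ multiples.
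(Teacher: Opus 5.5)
Your proof is correct and is essentially the paper's argument: both use $K_{\ell+1}\subseteq K_\ell\cup\{\ell+1\}$ together with monotonicity of $x$ under inclusion to conclude $x_{I_{\ell+1}}\ge x_{I_\ell\cup\{v_{\ell+1}\}}>0$. You also check that $|I_\ell|\le r$ for $\ell<H$, so that every variable the policy reads exists, which the paper leaves implicit. Starting the induction at $\ell=1$ rather than at the paper's $\ell=0$ (where $I=\emptyset$ and $x_\emptyset=1$) is harmless, since $V_1=\{s\}$.
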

\begin{proof}
	In the first iteration ($\ell=0$) we have that $K = I = \emptyset$. Thus, $x_I = x_\emptyset = 1$ and
	no error is returned.
	Suppose now that $v_1,\dotsc,v_\ell$ have been sampled without an error. We will argue
	that in the next iteration we also do not return an error.
	Let $K$ and $I$ be as in the algorithm in the iteration where $v_{\ell}$ was sampled and
	$K'$ and $I'$ as in the next iteration.
	Then $x_{I\cup \{v_{\ell}\}} > 0$, since otherwise $v_{\ell}$ would have a zero probability of
	being sampled.
	Furthermore, $K' \subseteq K\cup\{\ell\}$.
	It follows that $x_{I'} \ge x_{I\cup\{v_{\ell}\}} > 0$. Thus, also in the next iteration no error is returned.
\end{proof}

In order to use our results from the previous section, we relate the policy to our recursive randomized rounding algorithm.
\begin{lemma}\label{lem:mdp-rrr}
	Let $\vec x\in Q^{(r)}$ satisfying~\eqref{eq:lp-mdp}.

	Let $\mathtt{MDP}(\vec x)$ be the distribution over layer-spanning paths obtained from repeatedly applying \Cref{alg:mdp-policy} on the previous vertices and $\vec x$. 
	Then
	$\mathtt{MDP}(\vec x) = \RRR(\vec x)$.
\end{lemma}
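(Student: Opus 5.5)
We prove by induction on $r$ a slightly stronger statement that handles conditioned points, so that it applies inside the recursion of \Cref{alg:recursive_rounding}. Concretely, we compare the two processes one vertex at a time. In the MDP process, the next vertex $v_{\ell+1}$ is drawn with probability $x_{I\cup\{u\}}/x_I$, where $I=\{v_i : i\in K\}$. The set $K$ consists of, for each $j$, the last index $\le \ell$ that is a multiple of $d^j$.

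The first step is to unfold \Cref{alg:recursive_rounding} completely, so that it becomes a sequential vertex-by-vertex sampler. Consider the block structure: layer $\ell+1$ lies in a unique nested chain of blocks, one of size $d^j$ for each $j=0,\dots,r$. In the recursion, the point used at a leaf for layer $\ell+1$ is obtained from $\vec x$ by successive conditioning. At level $j$, if the block of size $d^{j+1}$ containing layer $\ell+1$ is not entered in its first sub-block, the point is conditioned on the last vertex of the preceding sub-block of size $d^j$. That vertex is $v_{k d^j}$ with $k d^j=\max\{k d^j\le \ell\}$. So the leaf samples $u\in V_{\ell+1}$ with probability $\tilde x_u$, where $\tilde{\vec x}$ is $\vec x$ conditioned successively on these vertices.

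The key identity is that iterated conditioning composes: $(\vec x^{|a})^{|b}_J = x_{J\cup\{a,b\}}/x_{\{a,b\}}$. Hence, by induction, conditioning on a set $I$ yields $x_{J\cup I}/x_I$. This requires $|I|\le r$, which holds since at most one vertex is contributed per level. So the leaf probability equals $x_{I\cup\{u\}}/x_I$.

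It remains to match the index sets, and this is the combinatorial step. If layer $\ell+1$ lies in the first sub-block at level $j$, then $\ell$ is a multiple of $d^{j+1}$, hence also of $d^j$. In that case nothing new is conditioned at level $j$, but the max-definition of $K$ gives the same index as a coarser level. Duplicates collapse because $I$ is a set. Conversely, every element of $K$ arises from some level whose preceding sub-block ends at that index. Index $0$ is excluded, which matches the situation where there is no preceding sub-block. I expect this bookkeeping to be the main obstacle, in particular making sure that blocks conditioned at higher levels remain in force when the recursion descends, which holds because conditioned points are passed down in \Cref{alg:recursive_rounding}.

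Given that both processes generate $v_{\ell+1}$ from the same conditional distribution given $v_1,\dots,v_\ell$, the chain rule over $\ell=1,\dots,\LH$ shows that the joint distributions of $(v_1,\dots,v_\LH)$ coincide. Well-definedness, meaning nonzero denominators, follows from \Cref{lem:marginal-preserving} and the preceding no-error lemma. This proves $\mathtt{MDP}(\vec x)=\RRR(\vec x)$.
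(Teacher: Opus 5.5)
Your proposal is correct and is essentially the paper's proof. The paper inducts down the recursion tree, showing that the subproblem on block $L_{j,h}$ receives $\vec x^{|I_{j,h}}$: a first child adds no conditioning and keeps the same index set, while a later child adds the single vertex $v_{(j'-1)d^{h-1}}$. This is your root-to-leaf unfolding plus the composition identity for conditioning, followed by the same layer-by-layer sampling comparison.

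One small slip: when layer $\ell+1$ lies in the first sub-block at level $j$, it is not $\ell$ itself that is a multiple of $d^{j+1}$ (e.g.\ $d=2$, $j=1$, $\ell=1$). Rather, $\max\{kd^j \le \ell\}$, the start offset of the enclosing block of size $d^{j+1}$, is a multiple of $d^{j+1}$, and that is exactly what your conclusion (that this index coincides with the coarser level's index) needs.
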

\begin{proof}
	For a set of layer indices $L\subseteq\{1,\dotsc,H\}$, let $V[L]\coloneqq\bigcup_{i\in L} V_i$ and
	let $G[L]$ be the induced subgraph on $V[L]$.

	Consider the structure of the recursive calls in \Cref{alg:recursive_rounding}.
	Each recursive call works on a subgraph $G[L_{j,h}]$
	where
	\begin{equation*}
		L_{j,h} \coloneqq \{(j-1) d^h + 1, (j-1) d^h+2,\dotsc, jd^h\}
	\end{equation*}
	for some $h\le r$ and $j\in \{1,\dotsc,d^{r-h}\}$.
	The call corresponding to $L_{j,h}$ recurses on $L_{j',h-1}$ for $j'\in\{dj-d+1,dj-d+2,\dotsc,dj\}$, in increasing order of $j'$. This corresponds to a partition of $L_{j,h}$ into $d$ consecutive parts of equal length $d^{h-1}$.

	Let $v_1\in V_1,\dotsc,v_H\in V_H$ be random variables corresponding to the selected vertices
	in $\RRR(\vec x)$.
	Define
	\begin{align*}
		K_{j,h} &\coloneqq \{k_{j,h}^{(i)} \mid i\in\ZZ\} \setminus\{0\}, \text{ where } k_{j,h}^{(i)} \coloneqq \max\{k\cdot d^i \mid k\in\ZZ, k\cdot d^i\le (j-1) d^h \} \mcomma \text{\ \ and} \\
		I_{j,h} &\coloneqq \{v_i \mid i\in K_{j,h}\} \mperiod
	\end{align*}
	We prove inductively that the LP solution we pass to the subproblem on $L_{j,h}$ is
	\begin{equation*}
		\vec x^{|I_{j,h}} \coloneqq \left(\frac{x_{I_{j,h}\cup J}}{x_{I_{j,h}}}\right)_{J\in \binom{V[L_{j,h}]}{\le h+1}} \mcomma
	\end{equation*}
	Note that the solution is always projected to variables of vertices in $V[L_{j,h}]$.
	For simplicity we omit this projection in the arguments below.

	For the root problem $L_{1,r}$, note that $K_{1,r} = \emptyset = I_{1,r}$ and therefore
	$\vec x^{|I_{1,r}} = \vec x$, which is indeed the input of the problem.
	Consider now some subproblem $L_{j,h}$ and assume that $\vec x^{|I_{j,h}}$ is its input.
	Let $j'\in \{dj-d+1,\dotsc,dj\}$.
	\paragraph*{Case 1: $j' = dj-d+1$.} Then \Cref{alg:recursive_rounding}
	calls subproblem $L_{j',h-1}$ with $\vec x^{|I_{j,h}}$, that is, it does not perform any additional conditioning.
	Note that
	\begin{align*}
		k^{(i)}_{j',h-1} &= \max\{k\cdot d^i \mid k\in\ZZ, k\cdot d^i\le (j'-1) d^{h-1} \} \\
		&= \max\{k\cdot d^i \mid k\in\ZZ, k\cdot d^i\le (dj-d) d^{h-1} \} \\
		&= \max\{k\cdot d^i \mid k\in\ZZ, k\cdot d^i\le (j-1) d^{h} \}
		= k^{(i)}_{j,h} \mperiod
	\end{align*}
	Thus, $\vec x^{|I_{j',h-1}} = \vec x^{|I_{j,h}}$,
	which proves the induction step for this case.

	\paragraph*{Case 2: $j' > dj-d+1$.} Then \Cref{alg:recursive_rounding} calls 
	$L_{j',h-1}$ with $(\vec x^{|I_{j,h}})^{|u}$, where $u=v_{(j'-1)d^{h-1}}$ is the last vertex
	selected in the subproblem $L_{j'-1,h-1}$.
	Note that $(j - 1) d^h < (j'-1)d^{h-1} < j d^h$. Thus, for every $i\ge h$ we have
	\begin{align*}
		k^{(i)}_{j',h-1} &=\max\{k\cdot d^i \mid k\in\ZZ, k\cdot d^i\le (j'-1) d^{h-1} \} \\
		&= \max\{k\cdot d^i \mid k\in\ZZ, k\cdot d^i\le (j-1) d^h \} = k^{(i)}_{j,h} \mperiod
	\end{align*}
	Furthermore, for all $i<h$ we have
	\begin{align*}
		k^{(i)}_{j',h-1} = \max\{k\cdot d^i \mid k\in\ZZ, k\cdot d^i\le (j'-1) d^{h-1} \} = (j'-1)d^{h-1} \mcomma
	\end{align*}
	which is the index of the layer of $u$. On the other hand,
	\begin{align*}
		k^{(i)}_{j, h} = \max\{k\cdot d^i \mid k\in\ZZ, k\cdot d^i\le (j-1) d^h \} = (j-1) d^h = k^{(h)}_{j,h} \mperiod
	\end{align*}
	It follows that
	\begin{align*}
		K_{j',h-1} &= \{k^{(i)}_{j',h-1} \mid i\in\ZZ\} \setminus\{0\} \\
		&= \{k^{(i)}_{j,h} \mid i\in\ZZ\} \setminus\{0\} \cup \{(j'-1) d^{h-1}\} = K_{j,h}\cup \{(j'-1) d^{h-1}\} \mperiod
	\end{align*}
	Thus, $I_{j',h-1} = I_{j,h} \cup \{u\}$.
	The induction step now follows from
	\begin{equation*}
		(\vec x^{|I_{j,h}})^{|u}_J = \left(\frac{x^{|I_{j,h}}_{J\cup \{u\}}}{x^{|I_{j,h}}_u}\right)
		= \left(\frac{x_{J\cup I_{j,h} \cup \{u\}} / x_{I_{j,h}}}{x_{I_{j,h}\cup \{u\}} / x_{I_{j,h}}}\right) = \vec x^{|I_{j,h}\cup\{u\}}_J = \vec x^{|I_{j',h-1}}_J \mperiod
	\end{equation*}
	The vertex $v_{\ell+1}$ in a layer $V_{\ell+1}$ is chosen in the deepest recursion and
	$v_1\in V_1,\dotsc,v_{\LH}\in V_{\LH}$ are sampled in this order.
	By the induction above, vertex $v_{\ell+1}$ is chosen according to $\vec x^{|I_{\ell+1,0}}$
	and $K_{\ell+1,0} = \{\{\max k\cdot d^i \mid k\in\ZZ, k\cdot d^i \le \ell\} \mid i\in\ZZ\}\setminus \{0\}$.
 	This is therefore identical to \Cref{alg:mdp-policy}.
\end{proof}

\begin{proof}[Proof of \Cref{thm:main-mdp}]
	Let $\mathcal O$ be defined as in \Cref{lem:mdp-opt} and $\opt\sim \mathcal O$.
	Compute
	$\vec x\in Q^{(r)}$ satisfying~\eqref{eq:lp-mdp} using \Cref{lem:getXWithConstraints}.
	Assume that \Cref{lem:getXWithConstraints} is successful, which happens with probability at least $1 - n^{-\sfrac 1 \eps}$.
	Then 
	\begin{equation*}
		\E_{P\sim \RRR(\vec x)}[f(P)] = \E_{P\sim \mathtt{MDP}(\vec x)}[f(P)] \ge \frac{1 - \eps}{(d-1)(r+1)} \E[f(\opt)] \mcomma
	\end{equation*}
	where the equality follows from \Cref{lem:mdp-rrr} and the inequality from \Cref{lem:mdp-opt} combined
	with \Cref{lem:getXWithConstraints}.
	The number of vertices from the trajectory that a decision depends on is the size of the set we condition
	on in \Cref{alg:mdp-policy}, which is at most $r$.
	We can set $d = 2$ and $r = O(\log \LH)$, or $d = \LH^\eps$ and $r = O(1/\eps)$ to obtain the trade-offs claimed in the theorem.
\end{proof}
 \section{Conclusion}

One key open question is whether the guarantees we obtain with our quasi-polynomial time procedures can also be obtained with polynomial time algorithms.
Already for Submodular Orienteering, the question of whether the Recursive Greedy algorithm of \textcite{chekuri2005recursive} can be improved to a polynomial time algorithm with similar guarantees is a long-standing open problem.
Another natural direction is to get a better understanding of what trade-offs are possible between the size of a policy and its performance for Submodular MDPs, also from an information-theoretic hardness perspective.

Moreover, we want to highlight that one can further thin out variables of our LP relaxation $Q^{(r)}$, which we deliberately did not do in order to keep the paper concise and focused.
More precisely, the Sherali-Adams type LP relaxation $Q^{(r)}$ that we introduced has a variable for every subset of vertices of size at most $r+1$.
It suffices to only consider variables for subsets of vertices that are relevant for later conditioning steps.
One can observe that this corresponds to only including variables for sets $K$ as defined in \Cref{alg:mdp-policy} together with one extra variable, and subsets thereof.
This change makes the notation more cumbersome, and our quasi-polynomial time procedure would still be quasi-polynomial time.
One advantage of this change, apart from leading to a smaller LP, is that our rounding would be marginal-preserving for all variables in the LP, whereas with our current LP, we only used the marginal-preserving property for single variables (see \Cref{lem:marginal-preserving}~\ref{en:marginals}), which was enough for our purposes.
 
\printbibliography

\appendix
\crefalias{section}{appendix}
\crefalias{subsection}{subappendix}

\section{Combinatorial \texorpdfstring{$O(n^\eps)$}{O(n^epsilon)}-approximation for Submodular Orienteering}
\label{sec:comb_neps}
Recall that the Recursive Greedy algorithm yields a logarithmic approximation in quasi-polynomial time.
Our LP-based method also gives an $O(n^\eps)$-approximation in time $n^{O(1/\eps)}$.
Here, we show that the latter can also be achieved with a simpler combinatorial algorithm.
However, this algorithm does not yield the flexibility we get with the LP-based method
and therefore, to the best of our knowledge, it cannot be used to obtain the MDP application presented
in this paper. To ease the presentation we omit the length constraint in the algorithm below. We comment on
how one could integrate it at the end of the section.

The combinatorial algorithm is based on unpublished work of the first author together with Nick Fischer, who kindly
gave us permission to include it.

As in the other algorithm, we assume that we are given a layered instance
with $H = d^r$ layers $V_1\cup\cdots\cup V_H$. For the reduction to this case, we refer to
the proof of \Cref{thm:main-orienteering}.
To simplify recursive calls, we additionally have inputs $S\subseteq V_1$ and $T\subseteq V_H$,
and the goal is to find a path from some vertex in $S$ to some vertex in $T$, maximizing $f$. 

The algorithm uses a combination of recursion and dynamic programming and is given in \Cref{alg:recursive_dp}.
For $d=2$, it behaves as Recursive Greedy from~\cite{chekuri2005recursive}.
We use the following notation in the algorithm. For two paths $P, P'$
such that the last vertex of $P$ has an arc to the first vertex of $P'$, we denote by $P\circ P'$ the
concatenation of $P$ and $P'$.
We write $N^+(v)$ as the out-neighborhood of $v$, that is, all $u\in V$ such that $(v,u)\in A$.
For some $L\subseteq \{1,2,\dotsc,H\}$, we write $G[L]$ as the induced subgraph of $G$ on vertices
$\bigcup_{i\in L} V_i$.

\begin{algorithm2e}
\DontPrintSemicolon
	\caption{Recursive dynamic program}\label{alg:recursive_dp}
	\KwIn{Layered graph $G=(V,A)$ with $V=V_1 \cup \cdots \cup V_{\LH}$, $S\subseteq V_1$, $T \subseteq V_H$, $d^r = \LH$.}
\KwOut{$S$-$T$ path (or $\bot$ if no such path exists).}

	\If{$r = 0$}{
		\Return $\mathrm{argmax} \{f(P) \mid P = (v), v\in S\cap T\}$ or $\bot$ if $S\cap T = \emptyset$\;
	}

	Let $L_i = \left\{\frac{(i-1)H}{d}+1,\dotsc,\frac{iH}{d}\right\}$ for all $i\in\{1,2,\dotsc,d\}$\;
	\For{$v\in V_{H/d}$}{
		Recursively compute $S$-$v$ path $P_v$ in $G[L_1]$ with $f$\;
		$D[1, v] = P_v$\;
	}
	\For{$i = 2,3,\dotsc,d$}{
		Let $D[i,v] = \bot$ for all $v\in V_{i H/d}$ \;
		\For{$u\in V_{(i-1) H/d}$ and $v\in V_{i H/d}$ with $D[i-1,u]\neq \bot$}{
			Recursively compute $N^+(u)$-$v$ path $P_{u,v}$ in $G[L_i]$ with function $f(\;\cdot \mid D[i-1,u])$ \;
			\If{$P_{u,v} \neq \bot$}{
				\If{$D[i,v] = \bot$ or $f(D[i-1,u] \circ P_{u,v}) > f(D[i,v])$}{
					$D[i,v] = D[i-1,u] \circ P_{u,v}$\;
				}
			}
		}
	}
	\Return $\mathrm{argmax} \{f(P) \mid P = D[d,v], P\neq \bot, v\in T\}$ or $\bot$ if the set is empty \;
\end{algorithm2e}

\begin{lemma}
	Let $G = (V,A)$ be a layered graph with $V = V_1\cup\cdots\cup V_H$, $S\subseteq V_1$, $T\subseteq V_H$, and $d^r = H$. Assume that there is a path from $S$ to $T$ and let $\opt$ be the path maximizing $f(\opt)$.
	Let $P$ be the path returned by \Cref{alg:recursive_dp} on these parameters. Then $P\neq \bot$ and
	\begin{equation*}
		f(P) \ge \frac{1}{(d-1)(r+1)} f(\opt) \ .
	\end{equation*}
	Furthermore, the running time is $n^{O(r)}$.
\end{lemma}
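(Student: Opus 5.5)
We argue by induction on $r$, mirroring the analysis of \Cref{lem:round-or-cut}, but with deterministic maxima replacing the LP-weighted averages. The running time is immediate: a call with parameter $r$ makes at most $n + (d-1)n^2$ recursive calls with parameter $r-1$, plus polynomial work. This gives $T(r) \le n^{O(1)} T(r-1)$, hence $n^{O(r)}$. Feasibility ($P\neq\bot$) follows by induction as well. Let $\opt$ pass through $u_i \in V_{iH/d}$. Then $D[1,u_1]\neq\bot$, and inductively $D[i,u_i]\neq\bot$, because the recursive call for $(u_{i-1},u_i)$ has a feasible $N^+(u_{i-1})$-$u_i$ path, namely the $i$th segment of $\opt$.

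For the approximation guarantee, write $\opt_i$ for the restriction of $\opt$ to $G[L_i]$ and $\alpha = (d-1)r$, so the inductive ratio for subcalls is $\alpha$ (for $r=0$ the ratio is $1$ and the base case is trivial). We track the quantities $a_i \coloneqq f(D[i,u_i])$ with $a_0=0$. By induction on $G[L_1]$, $a_1 \ge f(\opt_1)/\alpha$. For $i\ge2$, the dynamic program's choice gives $a_i \ge f(D[i-1,u_{i-1}]\circ P_{u_{i-1},u_i})$. Induction applied to the submodular function $f(\cdot\mid D[i-1,u_{i-1}])$ then yields
\[
a_i \ge a_{i-1} + \frac{1}{\alpha} f(\opt_i \mid D[i-1,u_{i-1}]) \ge a_{i-1} + \frac{1}{\alpha}\bigl(f(\opt_i) - a_{i-1}\bigr) .
\]
The second inequality uses monotonicity, in the form $f(\opt_i\mid B)\ge f(\opt_i\cup B)-f(B)\ge f(\opt_i)-f(B)$. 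This exactly parallels inequality \eqref{eq:expansionEfP}, with the factor $1-\frac{1}{(d-1)r}$ appearing.

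Summing these recurrences for $i=2,\dots,d$ together with the bound on $a_1$ gives
\[
a_d \ge \frac{1}{\alpha}\sum_{i=1}^d f(\opt_i) - \frac{1}{\alpha}\sum_{i=1}^{d-1} a_i \ge \frac{f(\opt)}{\alpha} - \frac{d-1}{\alpha} a_d ,
\]
using subadditivity $f(\opt)\le\sum_i f(\opt_i)$ and monotonicity $a_i\le a_d$ (since $D[i,u_i]$ is a prefix of $D[d,u_d]$ as constructed along $\opt$'s vertices). Rearranging gives $a_d(\alpha + d - 1) \ge f(\opt)$, that is, $a_d \ge f(\opt)/((d-1)(r+1))$. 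Finally, the returned path has value at least $f(D[d,u_d]) = a_d$ because $u_d\in T$.

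The main obstacle is justifying $a_i \le a_d$. It is \emph{not} true that $D[d,u_d]$ extends $D[i,u_i]$ in general, since the DP may pick a different predecessor. What we do have is $a_{i} \le a_{i+1}$: the recurrence above shows $a_{i+1}\ge a_i$, because marginal values are nonnegative. Chaining gives $a_i\le a_d$, so the monotone chain comes directly from the recurrence and no prefix argument is needed.
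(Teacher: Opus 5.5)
Your proof is correct and follows essentially the same route as the paper. Both use induction on $r$, the per-segment recurrence $f(D[i,u_i]) \ge f(D[i-1,u_{i-1}]) + \frac{1}{(d-1)r} f(\opt_i \mid D[i-1,u_{i-1}])$, telescoping, subadditivity, and monotonicity of $f(D[i,u_i])$ in $i$, followed by the same rearrangement to $\frac{1}{(d-1)(r+1)}$. You should delete the parenthetical prefix claim in the summation step: as you note yourself, it is false in general, and the correct justification is the one the paper uses, namely that the recurrence itself makes $f(D[i,u_i])$ non-decreasing in $i$.
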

\begin{proof}
	Note that the degree of the recursion tree
	is at most $(d-1) n^2$. 
	The depth of the recursion tree is $r$. Therefore, the number of nodes of the recursion tree
	is $(dn)^{O(r)}\le n^{O(r)}$. The number of operations in the algorithm (without the recursions)
	is polynomial. This proves the claimed running time.

	For the approximation guarantee, we argue via induction over $r$.
	For $r = 0$, the algorithm is optimal
	and since $(d-1)(r + 1) \ge 1$ the claim follows.
	Now assume that $r \ge 1$.
	For $i\in\{1,\dotsc,d\}$, let $\opt[L_i]$ denote the restriction of $\opt$ to $G[L_i]$, with $L_i$ as
	in the algorithm, and let $v_i$ be the vertex in the intersection of $\opt$ and $V_{iH/d}$.
	By the induction hypothesis, we have for all $i\in\{1,\dotsc,d\}$ that $D[i,v_i]\neq \bot$ and
	\begin{equation*}
		f(D[i,v_i]) \ge \begin{cases}
			\frac{f(\opt[L_1])}{(d-1)r} &\text{ if } i = 1, \\
			f(D[i-1,v_{i-1}]) + \frac{f(\opt[L_i] \mid D[i-1,v_{i-1}])}{(d-1)r} &\text{ if } i \ge 2.
		\end{cases}
	\end{equation*}
	Note that, in particular, $f(D[i,v_i])$ is non-decreasing in $i$.
	By adding the inequalities over all $i\in\{1,\dotsc,d\}$, we obtain
	\begin{align*}
		f(D[d,v_d]) &\ge \frac{1}{(d-1)r} \left(f(\opt[L_1]) + \sum_{i=2}^d f(\opt[L_i] \mid D[i-1,v_{i-1}])\right) \\
		&\ge \frac{1}{(d-1)r} \left(f(\opt[L_1]) + \sum_{i=2}^d \left[f(\opt[L_i]) - f(D[i-1,v_{i-1}])\right]\right) \\
		&\ge \frac{1}{(d-1)r} \left(f(\opt) - (d-1) f(D[d,v_{d}])\right) \mperiod
	\end{align*}
	By moving the term in $f(D[d,v_d])$ to the left and multiplying with $r/(r+1)$ we conclude
	\begin{equation*}
		f(P) \ge f(D[d,v_d]) \ge \frac{1}{(d-1)(r+1)} f(\opt) \mperiod \qedhere
	\end{equation*}
\end{proof}
By setting $d = n^\eps$ and rescaling $\eps$, we obtain the following.
\begin{theorem}
	For $\epsilon >0$, there is a combinatorial $O(n^\eps)$-approximation algorithm for Submodular Orienteering (without a length bound) with running time $n^{O(1/\eps)}$.
\end{theorem}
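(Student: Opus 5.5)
The theorem follows from the preceding lemma by choosing the recursion parameters. I would first apply the layering reduction from the proof of \Cref{thm:main-orienteering}: replace the input graph by the layered graph with vertices $(u,v,i)$ and the extended function $f'$. This reduction is approximation-preserving, and I pad the number of layers to $\LH = d^r$ with $\LH = \Theta(n)$ and $\LH \ge n+2$. Its instance has $\mathrm{poly}(n)$ vertices, so $n^{O(r)}$ running times are unaffected. The sources are $S=V_1=\{(s,s,1)\}$ and the targets are $T=V_{\LH}=\{(t,t,\LH)\}$. A layer-spanning path exists, since a walk from $s$ to $t$ exists.

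Next I would run \Cref{alg:recursive_dp} on this instance with $d = \lceil n^{\eps}\rceil$. Padding to a power of $d$ increases $\LH$ by at most a factor $d$, so $r = \lceil \log_d \LH\rceil = O(1/\eps)$. The lemma then gives a path $P\neq\bot$ with
\[
f'(P) \ge \frac{f'(\opt)}{(d-1)(r+1)} = \Omega\!\left(\frac{f(\opt)}{n^{\eps}/\eps}\right),
\]
in time $n^{O(r)} = n^{O(1/\eps)}$. Translating $P$ back into an $s$-$t$ walk by concatenating shortest paths does not lose value.

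The only wrinkle is the extra $1/\eps$ factor in the ratio. I would absorb it by running the algorithm with $\eps/2$ in place of $\eps$. For fixed $\eps$ and large $n$, $(2/\eps)\,n^{\eps/2} \le n^{\eps}$; for small $n$ the factor is a constant. This yields an $O(n^\eps)$-approximation in time $n^{O(1/\eps)}$.

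The main point to check is that the padding and the reduction keep $n$ polynomial, so that $d^r=\mathrm{poly}(n)$ and $r=O(1/\eps)$. Nothing here is a real obstacle.
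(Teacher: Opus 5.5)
Your proposal is correct and follows the same route as the paper: reduce to the layered instance from the proof of \Cref{thm:main-orienteering}, run the recursive dynamic program with $d \approx n^{\eps}$ so that $r = O(1/\eps)$ and the ratio is $(d-1)(r+1) = O(n^{\eps}/\eps)$, then absorb the $1/\eps$ factor by rescaling $\eps$. One small slip: the smallest power $H = d^r \ge n+2$ can be as large as about $d(n+2) = O(n^{1+\eps})$ rather than $\Theta(n)$, as you note yourself a paragraph later, and this still keeps the instance polynomial.
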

One could integrate a length function
in a similar way as in~\cite{chekuri2005recursive}, resulting in essentially the same result also 
including a length bound. This works by rounding the objective values and storing
for each rounded objective value the solution of lowest length.
We omit the details for the sake of simplicity.

 \end{document}